\theoremstyle{plain}
\newtheorem{theorem}{Theorem}[section]
\newtheorem{proposition}[theorem]{Proposition}
\newtheorem{lemma}[theorem]{Lemma}
\newtheorem{remark}[theorem]{Remark}
\numberwithin{equation}{section}
\newcommand{\bbR}{{\mathbb R}}
\newcommand{\cL}{{\mathcal L}}
\newcommand{\cD}{{\mathcal D}}
\newcommand{\cN}{{\mathcal N}}
\newcommand{\cH}{{\mathcal H}}
\newcommand{\cE}{{\mathcal E}}
\newcommand{\cB}{{\mathcal B}}
\newcommand{\cJ}{{\mathcal J}}
\renewcommand{\Re}{{\mathrm{Re}}}
\newcommand\ulambda{{\underline{\lambda}}}
\newcommand\Xom{{\mathcal L}^2}
\newcommand\tGamma{{\tilde{\Gamma}}}
\begin{document}

\title[]{On the generalized semi-relativistic Schr\"odinger-Poisson system in ${\mathbb R}^n$}


\author{W. Abou Salem, T. Chen and V. Vougalter}

\address{Department of Mathematics and Statistics, University of Saskatchewan, Saskatoon S7N 5E6, Canada \\ E-mail: walid.abousalem@usask.ca}

\address{Department of Mathematics, University of
Texas at Austin, Austin, TX, 78712, USA \\ E-mail: tc@math.utexas.edu}
 
\address{University of Cape Town, Department of Mathematics
and Applied Mathematics, Private Bag, Rondebosch 7701, South Africa \\E-mail: Vitali.Vougalter@uct.ac.za}

\maketitle


\begin{abstract} The Cauchy problem for the semi-relativistic Schr\"odinger-Poisson system of equations is studied in ${\mathbb R}^n, \ \ n\ge 1,$ for a wide class of nonlocal interactions.
 Furthermore, the asymptotic behavior of the solution as the mass tends to infinity is rigorously discussed, 
which corresponds to a non-relativistic limit. 
\end{abstract}


$\;$ \\

\noindent{\bf Keywords:} Schr\"odinger-Poisson system, mean-field dynamics, long-range interaction, functional spaces, density matrices, Cauchy problem, global existence, non-relativistic limit


$\;$ \\
\noindent{\bf AMS Subject Classification:} 82D10, 82C10


\section{Introduction}
\subsection{Motivation and heuristic discussion}
In this article, we study the global Cauchy problem for the semi-relativistic Schr\"odinger-Poisson system in ${\mathbb R}^n, \ \ n\ge 1,$ for a wide class of nonlocal interactions, both in the attractive and repulsive cases. This system is relevant to the description of many-body semi-relativistic quantum particles in the mean-field limit. 
We consider a system of $N$ semi-relativistic quantum particles in $ {\mathbb R}^n,\ \ n\ge 1$ with long-range two-body interactions $g\frac{1}{N}\sum_{1\le i<j\le N}\frac{1}{|x_i-x_j|^\gamma}$,  
with $0<\gamma\le 1$ if $n\geq2$, and $0<\gamma<1$ if $n=1$,
and with $g\in {\mathbb R}$.
In the mean-field limit, one can formally show that the density matrix that describes the {\it mixed} state of the system satisfies the Hartree-von Neumann equation
\begin{equation}
\label{eq:HartreeVonNeumann}
\begin{cases}
i\partial_t \rho(t) = [H_m + w_\gamma\star n(t),\rho(t)], \ \ x\in {\mathbb R}^n, \ \ n\ge 1, \ \ t\ge 0\\
H_m=\sqrt{m^2-\Delta}-m, \ \  w_\gamma = g \frac{1}{|x|^\gamma},\ \ n(t,x)=\rho(t,x,x),
\ \rho(0)=\rho_0
\end{cases},
\end{equation}
where $\Delta$ stands for the $n$-dimensional Laplacian, $\star$ stands for convolution in ${\mathbb R}^n,$ and $m\ge 0$ is the mass.\footnote{The rigorous derivation of the semi-relativistic Hartree-von Neumann equation is a topic of future work, see \cite{AS09, A10} for a derivation of this system of equations in the non-relativistic case.} 
Since $\rho_0$ is a positive, self-adjoint trace-class operator acting on $L^2({\mathbb R}^n),$ its kernel can be decomposed with respect to an  orthonormal basis of $L^2({\mathbb R}^n),$ 
\begin{equation}
\label{eq:kernelinit}
\rho_0(x,y) = \sum_{k\in {\mathbb N}} \lambda_k \psi_k(x)\overline{\psi_k(y)}
\end{equation}
where $\{\psi_k\}_{k\in {\mathbb N}}$ denotes an orthonormal basis of $L^2({\mathbb R}^n).$ 
Furthermore,
\begin{equation*}
	\ulambda:=\{\lambda_k\}_{k\in {\mathbb N}}\in l^1
	\;, \;\;
	\lambda_k\ge 0
	\; , \; \;
	\sum_k\lambda_k=1.
\end{equation*}
We will show that there exists a one-parameter family of 
complete orthonormal bases of $L^2({\mathbb R}^n)$, $\{\psi_k(t)\}_{k\in {\mathbb N}}$, for $t\in\mathbb{R}_+$,  such that  the kernel of the solution $\rho(t)$ to \eqref{eq:HartreeVonNeumann} can be represented as 
\begin{equation}
\label{eq:kernel}
\rho(t,x,y) = \sum_{k\in {\mathbb N}} \lambda_k \psi_k(t,x)\overline{\psi_k(t,y)}.
\end{equation}
Substituting (\ref{eq:kernel}) in (\ref{eq:HartreeVonNeumann}), the one-parameter family of orthonormal vectors  $\{\psi_k(t)\}_{k\in {\mathbb N}}$ is seen to satisfy the semi-relativistic Schr\"odinger-Poisson system
\begin{equation}
\label{eq:SP}
i\frac{\partial \psi_{k}}{\partial t}=H_{m}\psi_{k}+V\psi_{k}, \quad k\in 
{\mathbb N}
\end{equation}  
\begin{equation}
\label{eq:V}
V[\Psi]=w_\gamma \star n[\Psi], \quad \Psi:=\{\psi_{k}\}_{k=1}^{\infty},
\end{equation} 
\begin{equation} 
\label{eq:n}
n[\Psi(x,t)]=\sum_{k=1}^{\infty}\lambda_{k}|\psi_{k}|^{2} \,.
\end{equation} 

The purpose of this note is to show global well-posedness of (\ref{eq:SP}) in a suitable Banach space (to be specified below), and to study the asymptotics of the solution as the mass $m$ tends to $\infty,$ which corresponds to the non-relativistic limit, see \cite{S76}. The semi-relativistic Schr\"odinger-Poisson system of equations in a finite domain of ${\mathbb R}^3$ and with repulsive Coulomb interactions has been studied recently in \cite{ACV12,ACV12-2}. Here, we generalize the result of \cite{ACV12} in several ways. First, the problem is studied in ${\mathbb R}^n, \ \ n\ge 1.$ Second, we consider a wide class of nonlocal interactions in both the attractive and repulsive cases, and which includes the repulsive Coulomb case in three spatial dimensions. Third, in the non-relativistic limit $m\rightarrow\infty,$ we recover the non-relativistic Schr\"odinger-Poisson system of equations, which has been studied extensively, see for example \cite{BM91, IZ94} and references therein. In the special case when the initial density matrix is a pure state $\rho_0 = |\psi_0\rangle\langle\psi_0|,$ the Schr\"odinger-Poisson system becomes a single Hartree equation 
$$i\partial_t \psi = (\sqrt{m^2-\Delta}-m) \psi + (w_\gamma\star |\psi|^2)\psi, \ \ \psi(0)=\psi_0.$$
In that sense, our analysis generalizes the results of \cite{L07, CO07} to the effective dynamics of a {\it mixed state} of a semi-relativistic system.

The organization of this paper is as follows. In Subsection \ref{sec:Main} we state our main results. We prove local and global well-posedness in Section \ref{sec:Well-posedness}. Finally, in Section \ref{sec:Nonrelativistic}, we discuss the asymptotic behavior of the solutions as the mass tends to infinity. For the benefit of a general reader, we recall some useful results about fractional integration and fractional Leibniz rule in Appendix \ref{sec:Preliminaries}.

\subsection{Notation}

\begin{itemize}

\item $A\lesssim B$ means that there exists a positive constant $C$ independent mass $m$ such that $A\le C\; B.$ 
\item $L^p$ stands for the standard Lebesgue space. Furthermore, $L_I^pB = L^p(I;B).$ $\langle \cdot, \cdot \rangle_{L^2}$ denotes the $L^2(\bbR^n)$ inner product. We will often use the abbreviated notation $L^p_T$ for $L^p_{[0,T]}$, 
in the situation where $[0,T]$ denotes a time interval.

\item $l^1 = \{\{a_l\}_{l\in {\mathbb N}} | \ \ \sum_{l\ge 1} |a_l| <\infty\}.$
\item   $W^{s,p} = (-\Delta+1)^{-\frac{s}{2}}L^p,$ the standard (complex) Sobolev space. When $p=2$, $W^{s,2}= H^s$.
$\dot{H}^s$ denotes the homogeneous Sobolev space with norm 
$\|\psi\|_{\dot{H}^s} = (\langle \psi, (-\Delta)^s \psi\rangle_{L^2})^{\frac{1}{2}}.$
\item For fixed $\ulambda\in l^1, \ \ \lambda_k\ge 0,$ and for sequences of functions
$\Phi:=\{\phi_{k}\}_{k\in {\mathbb N}}$ and 
$\Psi:=\{\psi_{k}\}_{k\in {\mathbb N}},$ we define the inner product
$$
\langle\Phi,\Psi\rangle_{\Xom}:=\sum_{k\ge 1}\lambda_{k}\langle\phi_{k}, 
\psi_{k}\rangle_{L^{2}},
$$ 
which induces the norm 
$$
\|\Phi\|_{\Xom}=(\sum_{k\ge 1}\lambda_{k}{\|\phi_{k}\|
_{L^{2}}^{2}})^{\frac{1}{2}}.
$$
The corresponding Hilbert space is $\Xom.$
\item For fixed $\ulambda\in l^1, \ \ \lambda_k\ge 0,$ $$\cH^s = \{ \Psi=\{\psi_k\}_{k\in {\mathbb N}} | \ \ \psi_k\in H^s, \ \ \sum_{k\ge 1} \lambda_k \|\psi_k\|_{H^s}^2<\infty \}$$ is a Banach space with norm
$\|\Psi\|_{\cH^s} = (\sum_{k\ge 1}\lambda_k \|\psi_k\|_{H^s}^2)^{\frac{1}{2}}.$

\item For fixed $\ulambda\in l^1, \ \ \lambda_k\ge 0,$ $$\dot{\cH}^s = \{ \Psi=\{\psi_k\}_{k\in {\mathbb N}} | \ \ \psi_k\in \dot{H}^s, \ \ \sum_{k\ge 1} \lambda_k \|\psi_k\|_{\dot{H}^s}^2<\infty\}$$ is a Banach space with norm
$\|\Psi\|_{\dot{\cH}^s} = (\sum_{k\ge 1}\lambda_{k}\|\psi_{k}\|_{\dot{H}^s}^{2})^{\frac{1}{2}}.$

\end{itemize}

\subsection{Statement of main results}\label{sec:Main}

For $s\ge 1/2,$ we define the state space for the Schr\"odinger-Poisson system  by
$$
{\mathcal S}^s:=\{ (\Psi, \ulambda)  | \ \  \Psi=\{\psi_{k}\}_{k=1}\in \cH^s \; \; is \; a \; complete \; 
orthonormal \; system \; in \; L^{2}({\mathbb R}^n), 
$$
$$
\ulambda = \{\lambda_k\}_{k\in{\mathbb N}}\in l^1, \; \lambda_k\ge 0\}.
$$

The following is our first main result about the global Cauchy problem.

\begin{theorem}\label{th:Well-posedness}
Consider the system of equations (\ref{eq:SP})-(\ref{eq:n}), with $m\ge 0$,
with $0<\gamma\le 1$ if $ n\ge 2,$ and $0<\gamma<1$ if $n=1$, and let $s\ge \gamma/2$. 
Suppose that $(\Psi(0),\ulambda)\in {\mathcal S}^s.$ If $g\ge 0,$ or $g<0$ with $\|\Psi(0)\|_{\Xom}$ small enough, then
there is a unique mild solution $(\Psi,\ulambda)\in C([0,\infty], {\mathcal S}^s).$  
\end{theorem}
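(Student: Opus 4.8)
The plan is to establish local well-posedness by a contraction-mapping argument on the Duhamel (mild) formulation, and then promote it to a global solution using the conservation laws together with a Gronwall-type persistence-of-regularity estimate. First I would recast \eqref{eq:SP}--\eqref{eq:n} in integral form. Since $H_m=\sqrt{m^2-\Delta}-m$ is a real Fourier multiplier, the free propagator $e^{-itH_m}$ commutes with $(1-\Delta)^{s/2}$ and is unitary on every $H^s$, hence on $\cH^s$, so
\[
\psi_k(t)=e^{-itH_m}\psi_k(0)-i\int_0^t e^{-i(t-\tau)H_m}\,V[\Psi(\tau)]\,\psi_k(\tau)\,d\tau
\]
defines a map $\cT$ on a ball of $C([0,T],\cH^s)$ whose fixed point is the mild solution. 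The analytic core is the nonlinear estimate
\[
\||\nabla|^s\!\big(V[\Psi]\,\psi_k\big)\|_{L^2}\lesssim \||\nabla|^s V[\Psi]\|_{L^{p}}\|\psi_k\|_{L^{q}}+\|V[\Psi]\|_{L^\infty}\|\psi_k\|_{\dot H^s}
\]
from the fractional Leibniz rule of Appendix \ref{sec:Preliminaries}, where the potential $V[\Psi]=w_\gamma\star n[\Psi]$ is controlled through the Hardy--Littlewood--Sobolev/fractional-integration bounds of the same appendix, using $\|n[\Psi]\|_{L^1}=\|\Psi\|_{\Xom}^2$ and the $H^s$-regularity of $n[\Psi]=\sum_k\lambda_k|\psi_k|^2$. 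Summing over $k$ against the weights $\lambda_k$, with Cauchy--Schwarz in the $\ulambda$-weighted sequence spaces, yields $\|\cT[\Psi]-\cT[\Phi]\|_{C([0,T],\cH^s)}\le C\,T\,(1+R)\,\|\Psi-\Phi\|_{C([0,T],\cH^s)}$ on a ball of radius $R$, so for $T=T(\|\Psi(0)\|_{\cH^s})$ small enough $\cT$ is a contraction and produces a unique local solution. The threshold $s\ge\gamma/2$ is precisely what makes these estimates close.

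Next I would record the conservation laws. Because the time-dependent Hamiltonian $H_m+V[\Psi(t)]$ is self-adjoint with $V[\Psi(t)]$ real-valued, the associated two-parameter evolution is unitary on $L^2$; hence each $\|\psi_k(t)\|_{L^2}$ is preserved, orthonormality and completeness of $\{\psi_k(t)\}_{k\in\bbN}$ persist, and $\|\Psi(t)\|_{\Xom}=\|\Psi(0)\|_{\Xom}$, so the solution remains in the state space $\mathcal{S}^s$. Differentiating the energy
\[
E[\Psi]=\sum_{k}\lambda_k\langle\psi_k,H_m\psi_k\rangle_{L^2}+\tfrac12\int_{\bbR^n}\big(w_\gamma\star n[\Psi]\big)\,n[\Psi]\,dx
\]
along the flow shows $E[\Psi(t)]=E[\Psi(0)]$.

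The heart of the global step is the a priori bound at the energy level. In the defocusing case $g\ge0$ the potential term in $E$ is nonnegative, so $\sum_k\lambda_k\langle\psi_k,H_m\psi_k\rangle\le E[\Psi(0)]$ controls $\|\Psi(t)\|_{\dot{\cH}^{1/2}}$ uniformly in time. In the focusing case $g<0$ I would estimate the potential energy by Hardy--Littlewood--Sobolev, $|\int(w_\gamma\star n)\,n|\lesssim\|n\|_{L^{2n/(2n-\gamma)}}^2$, bound $\|n\|_{L^{2n/(2n-\gamma)}}\le\sum_k\lambda_k\|\psi_k\|_{L^{4n/(2n-\gamma)}}^2$, and interpolate each factor by Gagliardo--Nirenberg between $\|\psi_k\|_{L^2}=1$ and $\|\psi_k\|_{\dot H^{1/2}}$ with exponent $\gamma/2$; Hölder in the $\ulambda$-weighted sum then gives potential energy $\lesssim\|\Psi\|_{\Xom}^{2(2-\gamma)}\,\|\Psi\|_{\dot{\cH}^{1/2}}^{2\gamma}$. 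For $\gamma<1$ this is subcritical in the kinetic energy and absorbed by Young's inequality, whereas for the critical Coulomb case $\gamma=1$ it is exactly $\lesssim\|\Psi\|_{\Xom}^{2}\,\|\Psi\|_{\dot{\cH}^{1/2}}^{2}$, so that smallness of $\|\Psi(0)\|_{\Xom}$ (which is conserved) lets the kinetic term be absorbed, again giving a time-uniform bound on $\|\Psi(t)\|_{\dot{\cH}^{1/2}}$.

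Finally I would propagate the full $\cH^s$ regularity and continue. Applying $(1-\Delta)^{s/2}$ and pairing, the unitarity of $e^{-itH_m}$ kills the free part and leaves the commutator with $V$, yielding $\frac{d}{dt}\|\Psi(t)\|_{\cH^s}^2\lesssim\big(1+\|V[\Psi(t)]\|_{L^\infty}+\||\nabla|^sV[\Psi(t)]\|_{L^p}\big)\|\Psi(t)\|_{\cH^s}^2$ via the fractional Leibniz rule; the coefficient is bounded uniformly in $t$ because, by the previous step, the relevant norms of $V[\Psi(t)]$ are controlled by the conserved mass and the time-uniform $\dot{\cH}^{1/2}$ bound. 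Gronwall then produces an a priori bound on $\|\Psi(t)\|_{\cH^s}$ over every finite interval, and since the local existence time depends only on $\|\Psi(0)\|_{\cH^s}$, this precludes blow-up and extends the solution to $C([0,\infty),\mathcal{S}^s)$. I expect the main obstacle to be the critical endpoint $s=\gamma/2$ together with the focusing a priori bound: making the Gagliardo--Nirenberg/Hardy--Littlewood--Sobolev interpolation close at the mass-critical scaling $\gamma=1$, and keeping the constants and the smallness threshold on $\|\Psi(0)\|_{\Xom}$ consistent, is the delicate analytic point, while handling the nonlocal potential and the weighted sum over infinitely many modes without losing orthonormality is the principal bookkeeping burden.
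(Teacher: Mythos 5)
Your proposal follows essentially the same route as the paper's proof: local well-posedness by contraction on the Duhamel formulation, with the nonlinearity controlled by the fractional Leibniz rule and Hardy--Littlewood--Sobolev bounds (the paper's Lemma \ref{lm:Lipschitz} and Proposition \ref{pr:local}); conservation of charge and energy (Lemmas \ref{lm:charge} and \ref{lm:energy}); an energy-based a priori bound with the defocusing/focusing-smallness dichotomy; and a Gronwall argument propagating the $\cH^s$ norm to rule out blow-up (Proposition \ref{pr:global}). The cosmetic differences --- running Gronwall on $\frac{d}{dt}\|\Psi\|_{\cH^s}^2$ rather than on the Duhamel integral inequality, and phrasing the a priori bound at the $\dot{\cH}^{1/2}$ level with Young absorption rather than the paper's $\dot{\cH}^{\gamma/2}$ bound via Gagliardo--Nirenberg and concavity --- do not change the structure.

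The one genuine gap is your treatment of energy conservation. You assert that ``differentiating the energy along the flow'' gives $E[\Psi(t)]=E[\Psi(0)]$, but at the regularity of the solution this computation is not defined: for $\Psi(t)\in\cH^s$ with $s\le 1/2$ one only has $\partial_t\psi_k\in H^{s-1}$ and $H_m\psi_k\in H^{s-1}$, so the pairings arising in the formal differentiation (such as $\langle H_m\psi_k,\partial_t\psi_k\rangle_{L^2}$) make no literal sense, and the cancellation of the kinetic term cannot be invoked directly. The paper flags exactly this point and supplies the missing ingredient: a mollification $\cJ_\epsilon=(\epsilon H_m+1)^{-1}$, shown to be bounded from $\cH^s$ to $\cH^{s+1}$, contractive, and strongly convergent to the identity, so that $\cE(\cJ_\epsilon\Psi(t_2))-\cE(\cJ_\epsilon\Psi(t_1))$ can be computed term by term, the integrand bounded uniformly in $\epsilon$ using Lemma \ref{lm:fSmoothing}, Gagliardo--Nirenberg and charge conservation, and the limit $\epsilon\to 0$ taken by dominated convergence. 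Note also that the paper states its energy identity only for $s\ge 1/2$ (the natural energy regularity, since $\cE$ need not even be finite below it); your outline should either impose the same restriction at that step or explain how the range $s\in[\gamma/2,1/2)$ is handled. Inserting this regularization (or an equivalent approximation argument) is required before your global continuation step is complete; the rest of your plan matches the paper.
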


\begin{remark}
$\underline{\lambda}$ is time-independent, and hence the evolution can be thought as that of 
$\Psi\in \cH^s.$
\end{remark}

\begin{remark}\label{rm:uniformbd}
It follows from the proof of local well-posedness (Proposition \ref{pr:local} in Section \ref{sec:local}) that there exists a positive time $T$ independent of $m\ge 0$ such that $\|\Psi\|_{L^\infty_T \cH^s} \le C\|\Psi(0)\|_{\cH^s},$ 
where $C>0$ is independent of $m.$
\end{remark}

\begin{remark}\label{rm:zeromass}
The solution is continuous in the mass $m.$ In particular, as $m\searrow 0,$ and for $T>0$ fixed, $\Psi\rightarrow\Psi^{(0)}$ strongly in $L^\infty_T(\cH^s),$ where $\Psi^{(0)}$ satisfies (\ref{eq:SP})-(\ref{eq:n}) with initial condition $\Psi(0),$ see Proposition \ref{pr:zeromass} in Sect. \ref{sec:Well-posedness}. 
\end{remark}

The second result is about the infinite mass limit. 
Let $\Gamma$ satisfy the nonrelativistic Schr\"odinger-Poisson system of equations
\begin{equation*}
i\frac{\partial \psi_{k}}{\partial t}=-\frac{1}{2m}\Delta\psi_{k}+V\psi_{k}, \quad k\in 
{\mathbb N}
\end{equation*}  
\begin{equation*}
V[\Psi]=w_\gamma \star n[\Psi], \quad \Psi:=\{\psi_{k}\}_{k=1}^{\infty},
\end{equation*} 
\begin{equation*} 
n[\Psi(x,t)]=\sum_{k=1}^{\infty}\lambda_{k}|\psi_{k}|^{2},
\end{equation*} 
with initial condition $\Psi(0)=\{\psi_k(0)\}_{k\in {\mathbb N}}$.

\begin{theorem}\label{th:InfiniteMass}
Suppose that the hypotheses of Theorem \ref{th:Well-posedness} hold. Then there exists $\tau>0$ such that $\Psi\rightarrow \Gamma$ in $L^\infty_\tau(\cH^s)$ as $m\rightarrow\infty.$ 
\end{theorem}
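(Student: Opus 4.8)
The plan is to work with the mild (Duhamel) formulations of both systems and to exploit that the two free generators differ, at the level of Fourier symbols, by a quantity that is pointwise small as $m\to\infty$. Writing $K_m:=-\tfrac{1}{2m}\Delta$, both $e^{-itH_m}$ and $e^{-itK_m}$ are unimodular Fourier multipliers, hence act unitarily on each $H^s$ and componentwise on $\cH^s$. Subtracting the Duhamel representations of $\Psi$ and $\Gamma$ (which share the initial datum $\Psi(0)$), I would decompose
\begin{align*}
\Psi(t)-\Gamma(t)
&=\bigl(e^{-itH_m}-e^{-itK_m}\bigr)\Psi(0)\\
&\quad-i\int_0^t e^{-i(t-\tau)H_m}\bigl(V[\Psi]\Psi-V[\Gamma]\Gamma\bigr)(\tau)\,d\tau\\
&\quad-i\int_0^t\bigl(e^{-i(t-\tau)H_m}-e^{-i(t-\tau)K_m}\bigr)\bigl(V[\Gamma]\Gamma\bigr)(\tau)\,d\tau
=:A_m(t)+B_m(t)+C_m(t),
\end{align*}
where $V[\Psi]\Psi:=\{V[\Psi]\psi_k\}_k$ and the propagators act componentwise.

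The arithmetic core is the elementary bound $\bigl|\sqrt{m^2+|\xi|^2}-m-\tfrac{|\xi|^2}{2m}\bigr|\le \tfrac{|\xi|^4}{8m^3}$, obtained from Taylor's theorem with remainder. Since $|e^{i\theta}-1|\le\min(2,|\theta|)$, the symbol of $e^{-isH_m}-e^{-isK_m}$ is bounded by $\min\bigl(2,\tfrac{s|\xi|^4}{8m^3}\bigr)$, and interpolating via $\min(2,x)\le 2^{1-\theta}x^\theta$ gives, for any $\theta\in(0,1]$,
$$\bigl\|\bigl(e^{-isH_m}-e^{-isK_m}\bigr)f\bigr\|_{H^s}\lesssim \frac{s^\theta}{m^{3\theta}}\,\|f\|_{H^{s+4\theta}}.$$
For $A_m$ the interpolation is not even needed: with $f=\psi_k(0)\in H^s$ fixed, $\sup_{t\in[0,\tau]}\bigl\|(e^{-itH_m}-e^{-itK_m})\psi_k(0)\bigr\|_{H^s}^2\to 0$ by dominated convergence (the integrand is dominated by $4\langle\xi\rangle^{2s}|\widehat{\psi_k(0)}|^2$ and converges pointwise in $\xi$), and a second application of dominated convergence in the $\lambda_k$-weighted sum gives $\sup_{t}\|A_m(t)\|_{\cH^s}\to 0$. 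For $B_m$, unitarity of $e^{-i(t-\tau)H_m}$ on $\cH^s$ together with the Lipschitz bound on the nonlinearity $\Psi\mapsto V[\Psi]\Psi$ established in the local theory (and the $m$-uniform bounds of Remark~\ref{rm:uniformbd}) yields $\|B_m(t)\|_{\cH^s}\le C\int_0^t\|\Psi-\Gamma\|_{\cH^s}\,d\tau$, the term that feeds Gronwall.

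The genuine obstacle is $C_m$. Here the propagator difference is applied to the $m$-dependent datum $V[\Gamma]\Gamma$, so the dominated-convergence argument used for $A_m$ is unavailable, and, because the symbol difference grows with $|\xi|$, there is no small operator-norm bound on $L^2$ or $H^s$. The interpolation estimate above converts this into a convergence statement at the cost of $4\theta$ extra derivatives: $\|C_m(t)\|_{\cH^s}\lesssim m^{-3\theta}\,\tau^{1+\theta}\sup_{\tau}\|V[\Gamma]\Gamma\|_{\cH^{s+4\theta}}$. I would secure the required $m$-uniform $\cH^{s+4\theta}$ bound by a density argument: approximate $\Psi(0)$ in $\cH^s$ by a smoother orthonormal system $\tilde\Psi(0)\in\cH^{s+4\theta}$, for which persistence of regularity (the nonlinear estimates are $m$-independent and $e^{-itK_m}$ preserves $H^{s+4\theta}$) gives the corresponding $\tilde\Gamma\in L^\infty_\tau\cH^{s+4\theta}$ uniformly in $m$, so that the associated $\tilde C_m\to 0$; the discrepancies $\Psi-\tilde\Psi$ and $\Gamma-\tilde\Gamma$ are then controlled, uniformly in $m$, by the continuous dependence on data from the local theory. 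Finally, on the $m$-uniform existence interval $[0,\tau]$ furnished by Remark~\ref{rm:uniformbd}, collecting $\|A_m\|,\|C_m\|\to 0$ with the Gronwall term from $B_m$ gives $\sup_{t\in[0,\tau]}\|\Psi(t)-\Gamma(t)\|_{\cH^s}\le o(1)\,e^{C\tau}\to 0$, which is the assertion.
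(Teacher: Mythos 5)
Your proposal is correct, but it takes a genuinely different route from the paper's. The paper sidesteps what you call the genuine obstacle---the propagator difference acting on the $m$-dependent inhomogeneity $V[\Gamma]\Gamma$---by never comparing $\Psi$ and $\Gamma$ directly. Instead it introduces an $m$-independent intermediate profile $\tGamma$ solving $i\partial_t\tGamma=V[\tGamma]\tGamma$, $\tGamma(0)=\Psi(0)$ (no kinetic term at all), and proves separately that $\Psi\rightarrow\tGamma$ and $\Gamma\rightarrow\tGamma$ in $L^\infty_\tau\cH^s$. In each of these two comparisons the Duhamel inhomogeneity is $V[\tGamma]\tGamma$, a fixed element of $L^\infty_\tau\cH^s$ independent of $m$, so the term $\int_0^t\|(U^{(m)}(t-t')-1)V[\tGamma(t')]\tGamma(t')\|_{\cH^s}\,dt'$ vanishes as $m\rightarrow\infty$ by exactly the dominated-convergence argument you apply to $A_m$; the free-evolution term on the initial data is handled by a frequency splitting at $|k|=m^{1/4}$, and Gronwall closes each estimate. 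This buys brevity: no Taylor expansion of the dispersion relations, no interpolation with loss of $4\theta$ derivatives, and no density/persistence-of-regularity/continuous-dependence scaffolding. What your route buys in exchange is quantitative information and a direct comparison: for data with extra regularity your bound on $C_m$ gives an explicit rate $O(\tau^{1+\theta}m^{-3\theta})$, which the paper's purely qualitative argument cannot produce, and it exhibits the nonrelativistic flow as the limit of the semi-relativistic one without passing through the kinetic-free dynamics. Two points to tidy in your write-up: (i) you also need the $m$-uniform existence interval and bound for $\Gamma$, not just for $\Psi$ (Remark \ref{rm:uniformbd} concerns $\Psi$); this follows from the same contraction argument because $e^{-itK_m}$ is unitary on each $H^s$, and the paper states this step explicitly; (ii) in your density step you should not insist that the smoother approximants $\tilde\Psi(0)$ form a complete orthonormal system---approximating an orthonormal basis by a smoother orthonormal basis in $\cH^s$ would itself require justification---but this is harmless, since none of the estimates of Section \ref{sec:Well-posedness} uses orthonormality, only $\ulambda\in l^1$ with $\lambda_k\ge 0$, so generic smooth approximants suffice.
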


In other words, when the mass tends to infinity, the solution of the semi-relativistic 
Schr\"odinger-Poisson system of equations behaves like the nonrelativistic one.

\section{Well-posedness}\label{sec:Well-posedness}

\subsection{Local well-posedness}\label{sec:local}

In what follows, we fix $\underline{\lambda}\in l^1, \ \ \lambda_l\ge 0, \ \ l\in {\mathbb N}.$
We start by showing that the nonlinearity $V[\Psi]\Psi$ is locally Lipschitz.

\begin{lemma}\label{lm:Lipschitz}
For $\Psi,\Phi\in \cH^s,$
$$\|V[\Psi]\Psi - V[\Phi]\Phi\|_{\cH^s} \lesssim (\|\Psi\|_{\cH^s}^2+\|\Phi\|_{\cH^s}^2) \|\Psi-\Phi\|_{\cH^s}.$$
\end{lemma}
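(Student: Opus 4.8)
The plan is to exploit the bilinear structure of the map $\Psi\mapsto V[\Psi]\Psi$ and reduce the whole estimate to two bounds on the mean-field potential $V[\cdot]=w_\gamma\star n[\cdot]$. First I would write, componentwise,
\[
V[\Psi]\psi_k - V[\Phi]\phi_k = V[\Psi]\,(\psi_k-\phi_k) + \bigl(V[\Psi]-V[\Phi]\bigr)\phi_k,
\]
and observe that both scalar factors $V[\Psi]$ and $V[\Psi]-V[\Phi]$ are independent of $k$. Taking the $\lambda_k$-weighted $\ell^2$ sum of the $H^s$ norms and pulling these factors out of the sum, the first piece is controlled by $\bigl(\|V[\Psi]\|_{L^\infty}+\|V[\Psi]\|_{\dot W^{s,p}}\bigr)\,\|\Psi-\Phi\|_{\cH^s}$ and the second by $\bigl(\|V[\Psi]-V[\Phi]\|_{L^\infty}+\|V[\Psi]-V[\Phi]\|_{\dot W^{s,p}}\bigr)\,\|\Phi\|_{\cH^s}$. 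Thus the lemma reduces to showing $\|V[\Psi]\|_{L^\infty\cap\dot W^{s,p}}\lesssim\|\Psi\|_{\cH^s}^2$ together with the corresponding difference bound $\|V[\Psi]-V[\Phi]\|_{L^\infty\cap\dot W^{s,p}}\lesssim(\|\Psi\|_{\cH^s}+\|\Phi\|_{\cH^s})\,\|\Psi-\Phi\|_{\cH^s}$.

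The passage from a scalar potential times a function to the $H^s$ norm uses the fractional Leibniz rule from Appendix \ref{sec:Preliminaries}, in the form $\|fg\|_{H^s}\lesssim\|f\|_{L^\infty}\|g\|_{H^s}+\|f\|_{\dot W^{s,p}}\|g\|_{L^q}$ with $\tfrac1p+\tfrac1q=\tfrac12$; the Sobolev embedding $H^s\hookrightarrow L^q$ then absorbs $\|g\|_{L^q}$ into $\|g\|_{H^s}$, which is exactly what is needed so that the weighted sums close to $\|\Psi-\Phi\|_{\cH^s}$ and $\|\Phi\|_{\cH^s}$. For the potential itself I would use that $w_\gamma\star\,\cdot\,$ is, up to the constant $g$, the Riesz potential $I_{n-\gamma}$. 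The $L^\infty$ bound follows by splitting $w_\gamma$ into its singular near part and its tail, each in a suitable $L^r$, and applying Young's inequality, which requires only that $n[\Psi]$ lie in $L^a$ for $a$ in a range around $n/(n-\gamma)$. For the $\dot W^{s,p}$ bound I would commute $D^s$ through the convolution and apply the Hardy--Littlewood--Sobolev inequality, $\|V[\Psi]\|_{\dot W^{s,p}}=\|w_\gamma\star D^s n[\Psi]\|_{L^p}\lesssim\|D^s n[\Psi]\|_{L^{p_1}}$ with $\tfrac1p=\tfrac1{p_1}-\tfrac{n-\gamma}{n}$.

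It remains to estimate the density $n[\Psi]=\sum_j\lambda_j|\psi_j|^2$ in these norms. For the $L^a$ bound, the triangle inequality and Cauchy--Schwarz in the $\lambda$-weighted sum give $\|n[\Psi]\|_{L^a}\le\sum_j\lambda_j\|\psi_j\|_{L^{2a}}^2\lesssim\|\Psi\|_{\cH^s}^2$ via $H^s\hookrightarrow L^{2a}$. For $\|D^s n[\Psi]\|_{L^{p_1}}$ I would apply the fractional Leibniz rule to each $|\psi_j|^2$, choosing the $L^2$-based derivative factor $\|D^s\psi_j\|_{L^2}\le\|\psi_j\|_{H^s}$ so that the remaining factor is controlled by Sobolev embedding, then sum with Cauchy--Schwarz. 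The difference estimates are identical once one writes $|\psi_j|^2-|\phi_j|^2=(\psi_j-\phi_j)\overline{\psi_j}+\phi_j\overline{(\psi_j-\phi_j)}$, which produces the factorization $(\|\Psi\|_{\cH^s}+\|\Phi\|_{\cH^s})\,\|\Psi-\Phi\|_{\cH^s}$. The main obstacle is not any single inequality but the exponent bookkeeping: one must produce a single admissible choice of $(p,q,a,p_1)$ for which the fractional Leibniz condition, the Young/HLS scaling relations, and the Sobolev embeddings hold simultaneously. This is precisely where the hypothesis $s\ge\gamma/2$, together with the dimension-dependent restriction $0<\gamma\le1$ (resp. $0<\gamma<1$ if $n=1$), is consumed, since the requirement $H^s\hookrightarrow L^{2a}$ with $2a$ near $2n/(n-\gamma)$ forces $n-2s\le n-\gamma$.
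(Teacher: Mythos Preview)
Your overall architecture---bilinear splitting, fractional Leibniz with exponents $(p,q)$ satisfying $\tfrac1p+\tfrac1q=\tfrac12$, HLS for the $\dot W^{s,p}$ norm of the potential, and Sobolev embeddings to close the $\lambda$-weighted sums---is the same as the paper's, and the exponents you are aiming at coincide with the paper's explicit choice $p=2n/\gamma$, $q=2n/(n-\gamma)$.

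The one genuine gap is your $L^\infty$ bound on $V[\Psi]$. Splitting $w_\gamma=|x|^{-\gamma}$ into a near part and a tail and applying Young's inequality requires $n[\Psi]\in L^a$ for some $a>\tfrac{n}{n-\gamma}$, because the singular piece $|x|^{-\gamma}\mathbf{1}_{|x|<1}$ lies in $L^r$ only for $r<\tfrac{n}{\gamma}$ (strict). At the endpoint $s=\gamma/2$, however, $H^{\gamma/2}\hookrightarrow L^{2n/(n-\gamma)}$ gives at best $n[\Psi]\in L^{n/(n-\gamma)}$, not in any $L^a$ with $a>n/(n-\gamma)$. So your Young-splitting argument works for $s>\gamma/2$ but fails precisely at $s=\gamma/2$, which the lemma is meant to cover. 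The paper handles this not with Young but with the Hardy-type endpoint estimate of Lemma~\ref{lm:fSmoothing},
\[
\|w_\gamma\star|u|^2\|_{L^\infty}\lesssim \|u\|_{\dot H^{\gamma/2}}^2,
\]
which is exactly the missing endpoint and then polarizes to give $\|w_\gamma\star(|\psi_l|^2-|\phi_l|^2)\|_{L^\infty}\lesssim(\|\psi_l\|_{H^{\gamma/2}}+\|\phi_l\|_{H^{\gamma/2}})\|\psi_l-\phi_l\|_{H^{\gamma/2}}$. Replacing your Young step by this inequality fixes the gap and leaves the rest of your plan intact.
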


\begin{proof}
The proof relies on the fractional Leibniz rule and fractional integration, see Appendix \ref{sec:Preliminaries}. From the Minkowski inequality, 
\begin{equation}
\|V[\Psi]\Psi - V[\Phi]\Phi\|_{\cH^s} \lesssim \|(V[\Psi] - V[\Phi])\Psi\|_{\cH^s} + \|V[\Phi](\Psi-\Phi)\|_{\cH^s} \label{eq:VDiff1}
\end{equation}
We begin by estimating the first term on the right.
\begin{align}
&\|(V[\Psi] - V[\Phi])\Psi\|_{\cH^s} \lesssim \sum_{k,l\ge 1} \lambda_k\lambda_l \|w_\gamma\star (|\psi_l|^2 -|\phi_l|^2)\psi_k\|_{H^s} \nonumber\\
&\lesssim \sum_{k,l\ge 1}\lambda_k\lambda_l\{ \|w_\gamma\star (|\psi_l|^2 -|\phi_l|^2)\|_{L^\infty}\|\psi_k\|_{H^s} + \|w_\gamma\star (|\psi_l|^2 -|\phi_l|^2)\|_{W^{s,\frac{2n}{\gamma}}}\|\psi_k\|_{L^\frac{2n}{n-\gamma}} \}\nonumber\\
&\lesssim \sum_{k,l\ge 1}\lambda_k\lambda_l\{ \|\psi_l -\phi_l\|_{H^{\frac{\gamma}{2}}}(\|\psi_l\|_{H^{\frac{\gamma}{2}}}+\|\psi_l\|_{H^{\frac{\gamma}{2}}})\|\psi_k\|_{H^s} + \||\psi_l|^2 -|\phi_l|^2\|_{L^{\frac{2n}{2n-\gamma}}}\|\psi_k\|_{H^{\frac{\gamma}{2}}} \}\nonumber\\
&\lesssim (\|\Psi\|_{\cH^s}^2+\|\Phi\|_{\cH^s}^2) \|\Psi-\Phi\|_{\cH^s}.\label{eq:VDiff2}
\end{align}
Here, we used Minkowski inequality in the first line, fractional Leibniz rule (Lemma \ref{lm:fLeibniz} in the Appendix) in the second line, H\"older's inequality, fractional integration (Lemma \ref{lm:fIntegralOperator}) and Lemma \ref{lm:fSmoothing} in the third line.
Similarly, 
\begin{align}
& \|V[\Phi](\Psi-\Phi)\|_{\cH^s} \lesssim \sum_{k,l\ge 1} \lambda_k\lambda_l \|w_\gamma\star |\phi_l|^2(\psi_k-\phi_k)\|_{H^s} \nonumber \\
&\lesssim \sum_{k,l\ge 1}\lambda_k\lambda_l\{ \|w_\gamma\star |\phi_l|^2\|_{L^\infty}\|\psi_k-\phi_k\|_{H^s} + 
\|w_\gamma\star |\phi_l|^2\|_{W^{s,\frac{2n}{\gamma}}}\|\psi_k-\phi_k\|_{L^\frac{2n}{n-\gamma}} \}\nonumber\\
&\lesssim \sum_{k,l\ge 1}\lambda_k\lambda_l\{ \|\phi_l\|_{H^{\frac{\gamma}{2}}}^2\|\psi_k-\phi_k\|_{H^s} + \||\phi_l|^2\|_{L^{\frac{2n}{2n-\gamma}}}\|\psi_k-\phi_k\|_{H^{\frac{\gamma}{2}}} \}\nonumber\\
&\lesssim \|\Phi\|_{\cH^s}^2 \|\Psi-\Phi\|_{\cH^s} \label{eq:VDiff3}.
\end{align}
The claim of the lemma follows from inequalities (\ref{eq:VDiff1}), (\ref{eq:VDiff2}) and (\ref{eq:VDiff3}).
\end{proof}

Using a standard contraction map argument, the generalized semi-relativistic Schr\"odinger-Poisson system of equations is locally well-posed. 

\begin{proposition}\label{pr:local}
Consider the system of equations (\ref{eq:SP})-(\ref{eq:n}), with $m\ge 0$, $0<\gamma\le 1$ if $n\ge 2$, 
and $0<\gamma<1$ if $n=1$. Suppose that $(\underline{\lambda},\Psi(0))\in {\mathcal L}^s,\; s\ge \gamma/2.$ Then there exists a positive time $T$ such that the unique solution $\Psi \in C([0,T]; {\cH}^s).$ Furthermore, there exists a maximal time $\tau^*\in (0,\infty]$ such that $\lim_{t\nearrow\tau^*}\|\Psi(t)\|_{\cH^s}=\infty.$ 
\end{proposition}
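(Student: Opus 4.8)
The plan is to run a standard contraction-mapping argument on the Duhamel (mild) formulation of \eqref{eq:SP}. Writing $U_m(t):=e^{-itH_m}$ for the free semi-relativistic propagator, a mild solution is a fixed point of
\[
(\mathcal{T}\Psi)(t) := U_m(t)\Psi(0) - i\int_0^t U_m(t-\tau)\,V[\Psi(\tau)]\Psi(\tau)\,d\tau,
\]
where $U_m(t)$ acts componentwise, $(U_m(t)\Psi)_k = e^{-itH_m}\psi_k$. The crucial structural observation is that $H_m=\sqrt{m^2-\Delta}-m$ is a Fourier multiplier with real symbol $\sqrt{m^2+|\xi|^2}-m$, so $U_m(t)$ has unit-modulus symbol and commutes with $(1-\Delta)^{s/2}$; hence for every $m\ge 0$ it is an isometry on $H^s$ with constant exactly $1$, and therefore an isometry on $\cH^s$. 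I would carry this isometry constant explicitly, since its $m$-independence is what ultimately yields the $m$-uniform existence time asserted in Remark \ref{rm:uniformbd}.

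First I would fix $R:=2\|\Psi(0)\|_{\cH^s}$ and work in the closed ball $B_R:=\{\Psi\in C([0,T];\cH^s): \|\Psi\|_{L^\infty_T\cH^s}\le R\}$, with $T>0$ to be chosen. Taking $\Phi=0$ in Lemma \ref{lm:Lipschitz}—whose proof uses the embedding $H^s\hookrightarrow H^{\gamma/2}$ and hence the standing assumption $s\ge\gamma/2$—gives the a priori bound $\|V[\Psi]\Psi\|_{\cH^s}\lesssim \|\Psi\|_{\cH^s}^3$, so the isometry of $U_m$ yields
\[
\|\mathcal{T}\Psi\|_{L^\infty_T\cH^s}\le \|\Psi(0)\|_{\cH^s} + C\,T\,R^3,
\]
while Lemma \ref{lm:Lipschitz} directly gives
\[
\|\mathcal{T}\Psi-\mathcal{T}\Phi\|_{L^\infty_T\cH^s}\le C\,T\,R^2\,\|\Psi-\Phi\|_{L^\infty_T\cH^s}.
\]
Choosing $T$ small enough that $CTR^3\le R/2$ and $CTR^2\le 1/2$—a condition depending only on $\|\Psi(0)\|_{\cH^s}$ and not on $m$—makes $\mathcal{T}$ a contraction of $B_R$ into itself. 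The Banach fixed-point theorem then produces a unique fixed point in $B_R$, which is the desired mild solution; its continuity in $t$ with values in $\cH^s$ follows from the strong continuity of $U_m(\cdot)$ on $H^s$ together with continuity of the Duhamel integral in its upper limit.

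To promote uniqueness from $B_R$ to all of $C([0,T];\cH^s)$, I would compare two solutions $\Psi,\Phi$ with the same data: Lemma \ref{lm:Lipschitz} and the isometry give
\[
\|\Psi(t)-\Phi(t)\|_{\cH^s}\le C\int_0^t \bigl(\|\Psi(\tau)\|_{\cH^s}^2+\|\Phi(\tau)\|_{\cH^s}^2\bigr)\,\|\Psi(\tau)-\Phi(\tau)\|_{\cH^s}\,d\tau,
\]
and Gronwall's inequality forces $\Psi=\Phi$ on any subinterval where both remain bounded; an open–closed argument extends this to the whole interval. For the maximal time, I would set $\tau^*$ equal to the supremum of all $T$ for which the solution exists in $C([0,T];\cH^s)$. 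Since the local existence time depends only on the $\cH^s$-norm of the data, if $\tau^*<\infty$ while $\limsup_{t\nearrow\tau^*}\|\Psi(t)\|_{\cH^s}<\infty$, one could restart the construction from a time $t_0<\tau^*$ with $\tau^*-t_0$ smaller than the guaranteed existence time, extending the solution past $\tau^*$ and contradicting maximality; hence $\lim_{t\nearrow\tau^*}\|\Psi(t)\|_{\cH^s}=\infty$ whenever $\tau^*<\infty$.

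Since the genuine analytic difficulty is already encapsulated in Lemma \ref{lm:Lipschitz}, the remaining steps are largely bookkeeping, and there is no single serious obstacle. The one point I would be most careful about is tracking the $m$-dependence throughout: both the isometry constant of $U_m$ and the constant in Lemma \ref{lm:Lipschitz} are independent of $m\ge 0$, so the radius $R$ and the contraction time $T$ can be taken uniform in $m$, which is precisely what Remark \ref{rm:uniformbd} and the later non-relativistic and zero-mass limits require. I would also note in passing that because the effective potential $V[\Psi]$ is real-valued, the map $\psi_k(0)\mapsto\psi_k(t)$ is implemented by a unitary on $L^2$, so orthonormality and completeness of the frame are preserved along the flow; this is what keeps the solution inside the state space $\mathcal{S}^s$.
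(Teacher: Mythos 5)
Your proposal is correct and follows essentially the same route as the paper: a contraction-mapping argument for the Duhamel map on a ball in $L^\infty_T(\cH^s)$, using the $m$-uniform isometry of $e^{-itH_m}$ on $\cH^s$ and Lemma \ref{lm:Lipschitz} for the contraction estimate. The only cosmetic difference is that the paper re-derives the cubic self-mapping bound via the fractional Leibniz rule and fractional integration, whereas you obtain it by taking $\Phi=0$ in Lemma \ref{lm:Lipschitz}; you also spell out the uniqueness, blow-up alternative, and $m$-uniformity details that the paper delegates to the standard reference.
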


\begin{proof}
Given $\rho,T>0,$ consider the Banach space $${\mathcal B}^s_{T,\rho}=\{\Psi \in L^\infty_T (\cH^s): \; \|\Psi\|_{L^\infty_T\cH^s}\le\rho\}.$$ 
Let $U^{(m)} = e^{-itH_m},$  the unitary operator generated by the semi-relativistic Hamiltonian $H_m= \sqrt{-\Delta +m^2}-m. $
We define the mapping $\cN$  by
$${\cN}(\Psi)(t) = U^{(m)}(t) \Psi(0) - i\int_0^t U^{(m)}(t-t') V[\Psi(t')] \Psi(t')dt',$$
which is the solution given by the Duhamel formula. 
First we show that $\cN$ is a mapping from $\cB^s_{T,\rho}$ into itself.
\begin{align*}
\|\cN (\Psi)\|_{L^\infty_T \cH^s} &\le \|\Psi(0)\|_{\cH^s} + T \|V[\Psi]\Psi\|_{L^\infty_T \cH^s}\\
&\le \|\Psi(0)\|_{\cH^s} + T \sum_{k,l\ge 1} \lambda_k \lambda_l \|w_\gamma \star |\psi_l|^2 \psi_k\|_{L^\infty_T H^s}\\
&\le  \|\Psi(0)\|_{\cH^s} + T \sum_{k,l\ge 1} \lambda_k \lambda_l \{\|w_\gamma\star(|\psi_l|^2)\|_{L^\infty_TL^\infty} \|\psi_k\|_{L^\infty_T H^s} +  \\ & +  \|w_\gamma\star(|\psi_l|^2)\|_{L^\infty_T W^{s,\frac{2n}{\gamma}}} \|\psi_k\|_{L^\infty_T L^{\frac{2n}{n-\gamma}}}\} ,
\end{align*}
where we have used fractional Leibniz rule (Lemma \ref{lm:fLeibniz}) in the last inequality. It follows from fractional integration (Lemma \ref{lm:fIntegralOperator}) and Sobolev embedding $H^{\frac{\gamma}{2}}\hookrightarrow L^{\frac{2n}{n-\gamma}}$ that 
\begin{align*}
\|\cN (\Psi)\|_{L^\infty_T \cH^s} &\le \|\Psi(0)\|_{\cH^s} + T \sum_{k,l\ge 1} \lambda_k \lambda_l \{\|\psi_l\|^2_{L^\infty_TH^{\frac{\gamma}{2}}} \|\psi_k\|_{L^\infty_T H^s} + \\ & +  \|\psi_l\|^2_{L^\infty_TL^{\frac{2n}{n-\gamma}}} \|\psi_k\|_{L^\infty_T H^s}\}\\
&\le \|\Psi(0)\|_{\cH^s} + T \sum_{k,l\ge 1} \lambda_k \lambda_l \{\|\psi_l\|^2_{L^\infty_TH^{\frac{\gamma}{2}}} \|\psi_k\|_{L^\infty_T H^s}\}\\
&\le \|\Psi(0)\|_{\cH^s} + T (\sum_{l\ge 1} \lambda_l \{\|\psi_l\|^2_{L^\infty_TH^{\frac{\gamma}{2}}} )(\sum_{k\ge 1}\lambda_k\|\psi_k\|^2_{L^\infty_T H^s})^{\frac{1}{2}}\\
&\le \|\Psi(0)\|_{\cH^s} + T \|\Psi\|_{L^\infty_T \cH^{\frac{\gamma}{2}}}^2 \|\Psi\|_{L^\infty_T \cH^s}, 
\end{align*}
where we have used the fact that $\lambda_k\ge 1 $ and $\sum_{k\ge 1}\lambda_k =1$ before the last inequality. 

Since $s\geq\frac\gamma2$, and since by assumption, $\Psi\in B_{T,\rho}^s$, we can choose $T$ and $\rho$ such that 
\begin{equation*}
\|\Psi(0)\|_{\cH^s}\le \frac{\rho}{2}, \ \ T\rho^2 <\frac{1}{2},
\end{equation*}
it follows from the last inequality and the Duhamel formula that  
$$\|\Psi\|_{L^\infty_T \cH^s} \le 2\|\Psi(0)\|_{\cH^s}\le \rho.$$ 

Second, since the nonlinearity is locally Lipschitz (Lemma \ref{lm:Lipschitz}), $\cN$ is a contraction map for sufficiently small $T.$
\begin{align*}
&\|\cN(\Psi) - \cN(\Phi)\|_{L^\infty_T\cH^s} \le  T \|V[\Psi]\Psi - V[\Phi]\Phi\|_{L^\infty_T \cH^s}\\
&\lesssim  T\rho^2 \|\Psi-\Phi\|_{L^\infty_T \cH^s}.
\end{align*}
Local well-posedness follows from a standard contraction mapping argument, see for example, \cite{Cazenave96}.
\end{proof}

It follows from local well-posedness that for every $k\in {\mathbb N},$ $\|\psi_k\|_{L^2}$ is conserved.

\begin{lemma}\label{lm:charge}
Suppose that the hypotheses of Proposition \ref{pr:local} hold. Then $\|\psi_k(t)\|_{L^2} = \|\psi_k(0)\|_{L^2}, \ \ t\in [0,\tau^*).$
\end{lemma}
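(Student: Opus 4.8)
The plan is to exploit the fact that, along the solution, each $\psi_k$ satisfies a \emph{linear} Schr\"odinger equation whose Hamiltonian is self-adjoint, so that the associated flow is unitary on $L^2$. Formally, differentiating the charge and inserting \eqref{eq:SP} gives
\begin{equation*}
\frac{d}{dt}\|\psi_k\|_{L^2}^2 = 2\,\Re\big\langle -i(H_m+V[\Psi])\psi_k,\psi_k\big\rangle_{L^2}.
\end{equation*}
Since $H_m=\sqrt{m^2-\Delta}-m$ is self-adjoint and, because $g\in\bbR$ and $n[\Psi]=\sum_{l}\lambda_l|\psi_l|^2\ge 0$ is real, the multiplication operator $V[\Psi]=w_\gamma\star n[\Psi]$ is real-valued and hence self-adjoint, the pairing $\langle (H_m+V[\Psi])\psi_k,\psi_k\rangle_{L^2}$ is real; multiplying by $-i$ and taking the real part gives $0$. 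The only difficulty is that this computation is not justified directly at the regularity $s\ge\gamma/2$, which may lie below $1/2$: then $\psi_k$ need not belong to the form domain $H^{1/2}$ of $H_m$, so the pairing above is not \emph{a priori} meaningful.

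To make the argument rigorous I would freeze the solution and treat $V(t,x):=V[\Psi(t)](x)$ as a given external potential. Fix any $T<\tau^*$, so that $\Psi\in C([0,T];\cH^s)\subset C([0,T];\cH^{\gamma/2})$. Repeating the fractional-integration bound from the proof of Proposition \ref{pr:local} (Lemma \ref{lm:fIntegralOperator}) gives, for each $t$,
\begin{equation*}
\|V(t)\|_{L^\infty}\le \sum_{l\ge1}\lambda_l\,\|w_\gamma\star|\psi_l(t)|^2\|_{L^\infty}\lesssim \sum_{l\ge1}\lambda_l\,\|\psi_l(t)\|_{H^{\gamma/2}}^2=\|\Psi(t)\|_{\cH^{\gamma/2}}^2<\infty,
\end{equation*}
and the same estimate applied to a difference shows that $t\mapsto V(t)$ is continuous from $[0,T]$ into $L^\infty(\bbR^n)$. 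Thus multiplication by $V(t)$ is a bounded, self-adjoint, norm-continuous family of perturbations of the self-adjoint generator $H_m$.

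For such a time-dependent Hamiltonian the standard theory of linear evolution equations produces a unique two-parameter propagator $\{U(t,s)\}_{0\le s\le t\le T}$ on $L^2(\bbR^n)$; writing $W(t):=U(t,0)$ and using self-adjointness one verifies $\partial_t(W^*W)=iW^*(H_m+V(t))W-iW^*(H_m+V(t))W=0$, so that $W(t)$ is unitary. Since $\psi_k$ solves the Duhamel equation with the \emph{given} potential $V(t)$ and mild solutions of this linear problem are unique, $\psi_k(t)=U(t,0)\psi_k(0)$, whence $\|\psi_k(t)\|_{L^2}=\|\psi_k(0)\|_{L^2}$ on $[0,T]$. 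As $T<\tau^*$ was arbitrary, the identity holds on $[0,\tau^*)$. The main obstacle is precisely the low-regularity justification noted above; the unitarity argument circumvents it by using only $L^2$-theory together with the boundedness of $V$. An alternative would be to establish conservation first for data in $\cH^\sigma$ with $\sigma$ large (where the formal energy identity is legitimate), and then pass to the limit using the continuous dependence on initial data furnished by Lemma \ref{lm:Lipschitz}.
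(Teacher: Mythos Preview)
Your proposal is correct, and in fact goes beyond the paper's own argument. The paper's entire proof consists of the formal computation you describe in your first paragraph: multiply \eqref{eq:SP} by $\overline{\psi_k}$, integrate, and take the imaginary part, using that $H_m$ and $V[\Psi]$ are (formally) symmetric. The paper does not address the low-regularity issue you raise---namely that for $s=\gamma/2<1/2$ the pairing $\langle\psi_k,H_m\psi_k\rangle$ is not \emph{a priori} well-defined---and simply records the formal identity. Your unitary-propagator argument, based on freezing $V[\Psi(t)]$ as a bounded, norm-continuous-in-$t$, real-valued multiplication operator perturbing the self-adjoint $H_m$, is a legitimate way to close this gap using only $L^2$ theory; it is more careful than what the paper provides. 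The alternative you mention at the end (prove it first for regular data and pass to the limit) is closer in spirit to the regularization device the paper employs for energy conservation in Lemma~\ref{lm:energy}, but the paper does not invoke any such step here.
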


\begin{proof}
Multiplying (\ref{eq:SP}) by $\overline{\psi_k}$ and integrating over space yields
$$\frac{i}{2}\partial_t \|\psi_l\|^2 = \langle \psi_l, H_m\psi_l\rangle + \langle \psi_l , V[\Psi]\psi_l\rangle.$$
Taking the imaginary part of both sides of the equation yields $\partial_t \|\psi_l\|^2=0.$
\end{proof}

The energy functional associated with the semi-relativistic Schr\"odinger-Poisson system is 
$$\cE(\Psi) = \frac{1}{2}\langle \Psi, H_m\Psi\rangle_{\cL^2} + \frac{1}{4}\langle \Psi, V[\Psi]\Psi\rangle_{\cL^2} .$$ Formally, conservation of energy follows from multiplying (\ref{eq:SP}) by $\lambda_l \partial_t \overline{\psi_k},$ integrating over space, and summing over $k\ge 1.$ To make the argument precise, we need a regularization procedure.

\begin{lemma}\label{lm:energy}
Suppose that the hypotheses of Proposition \ref{pr:local} hold. 
Then $\cE(\Psi(t)) = \cE(\Psi(0))$, $t\in [0,\tau^*)$,
is satisfied for solutions $\Psi\in C([0,\tau^*),\cH^s)$ 
with $s\geq\frac12$.

\end{lemma}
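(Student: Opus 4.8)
The plan is to make rigorous the formal computation sketched before the statement, via a regularization-and-limit argument. First I would record the formal identity cleanly. Differentiating the kinetic part and using self-adjointness of $H_m$ gives $\frac{d}{dt}\,\tfrac12\langle\Psi,H_m\Psi\rangle_{\cL^2}=\sum_k\lambda_k\,\Re\langle H_m\psi_k,\partial_t\psi_k\rangle$. For the quartic part I would write $\tfrac14\langle\Psi,V[\Psi]\Psi\rangle_{\cL^2}=\tfrac14\int (w_\gamma\star n)\,n$ with $n=n[\Psi]$, and exploit that $w_\gamma$ is even so that, by Fubini, the two terms produced by differentiating the symmetric bilinear form coincide; this yields $\frac{d}{dt}\,\tfrac14\langle\Psi,V[\Psi]\Psi\rangle_{\cL^2}=\tfrac12\int V[\Psi]\,\partial_t n=\sum_k\lambda_k\,\Re\langle V[\Psi]\psi_k,\partial_t\psi_k\rangle$, using that $V[\Psi]$ is real. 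Adding these and substituting $H_m\psi_k+V[\Psi]\psi_k=i\,\partial_t\psi_k$ from \eqref{eq:SP} gives $\frac{d}{dt}\cE(\Psi)=\sum_k\lambda_k\,\Re\langle i\,\partial_t\psi_k,\partial_t\psi_k\rangle=0$, since each summand is purely imaginary.

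The difficulty is that at the endpoint $s=\tfrac12$ the time derivative $\partial_t\psi_k$ lives only in a space of negative order, so the pairings above and the exchange of $\frac{d}{dt}$ with the sum over $k$ and the spatial integral are not justified directly. To regularize, I would approximate $\Psi(0)$ by $\Psi^{(j)}(0)=\{P_j\psi_k(0)\}_{k}$, where $P_j$ is a smooth frequency cutoff (or mollification), so that $\Psi^{(j)}(0)\in\cH^{s'}$ for every $s'$ and $\Psi^{(j)}(0)\to\Psi(0)$ in $\cH^s$ with $\sup_j\|\Psi^{(j)}(0)\|_{\cH^s}<\infty$. By Proposition \ref{pr:local} the solutions $\Psi^{(j)}$ exist on a common interval $[0,T]$, and by persistence of regularity (re-running the contraction at level $s'$ together with the Gronwall bound coming from $\|V[\Psi]\Psi\|_{\cH^{s'}}\lesssim\|\Psi\|_{\cH^{s}}^2\|\Psi\|_{\cH^{s'}}$, obtained exactly as in Lemma \ref{lm:Lipschitz}) one has $\Psi^{(j)}\in C([0,T],\cH^{s'})$. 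Choosing $s'$ large (say $s'=2$) makes $\Psi^{(j)}$ a strong solution with $\partial_t\psi_k^{(j)}\in C([0,T],L^2)$, so that the computation of the previous paragraph is classically valid and $\cE(\Psi^{(j)}(t))=\cE(\Psi^{(j)}(0))$ for $t\in[0,T]$.

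To pass to the limit, continuous dependence from the contraction estimate (Lemma \ref{lm:Lipschitz}) gives $\Psi^{(j)}\to\Psi$ in $C([0,T],\cH^s)$, so it remains to verify that $\cE$ is continuous on $\cH^s$ for $s\ge\tfrac12$. The kinetic form is continuous on $\cH^{1/2}$ because $0\le\langle\psi_k,H_m\psi_k\rangle\le\|\psi_k\|_{\dot H^{1/2}}^2$, using $\sqrt{|\xi|^2+m^2}-m\le|\xi|$; the quartic term is locally Lipschitz on $\cH^{1/2}$ by the same fractional-integration and fractional-Leibniz estimates used in Lemma \ref{lm:Lipschitz}, which apply since $s\ge\tfrac12\ge\tfrac{\gamma}{2}$. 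Hence $\cE(\Psi(t))=\lim_j\cE(\Psi^{(j)}(t))=\lim_j\cE(\Psi^{(j)}(0))=\cE(\Psi(0))$ on $[0,T]$. Since the local existence time $T$ depends only on $\|\Psi(t_0)\|_{\cH^s}$, which remains finite on $[0,t]$ for every $t<\tau^*$, covering $[0,t]$ by finitely many such intervals propagates the identity to all of $[0,\tau^*)$.

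The main obstacle is exactly this endpoint regularity: at $s=\tfrac12$ the formal manipulation cannot be carried out in place, which is also why the lemma requires $s\ge\tfrac12$ (the kinetic energy is finite precisely there). The two steps deserving genuine care are the persistence-of-regularity/strong-solution step, needed to license the formal computation for the approximants, and the continuity of the quartic energy term on $\cH^{1/2}$, where the fractional integration machinery of the Appendix is essential.
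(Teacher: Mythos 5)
Your proof is correct, but it takes a genuinely different route from the paper. The paper regularizes the \emph{solution}: it introduces the smoothing operator $\cJ_\epsilon=(\epsilon H_m+1)^{-1}$, which gains one derivative on the scale $\cH^s$, computes $\frac{d}{dt}\cE(\cJ_\epsilon\Psi(t))$ directly for the given $\cH^{1/2}$-solution (the commutation $H_m\cJ_\epsilon=\cJ_\epsilon H_m$ kills the leading term), and then sends $\epsilon\to0$ using the Dominated Convergence Theorem, with the uniform domination of the error term $g_\epsilon(t)$ obtained from the Hardy-type inequality (Lemma \ref{lm:fSmoothing}), Gagliardo--Nirenberg, and conservation of charge. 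You instead regularize the \emph{initial data}, upgrade the approximants to strong solutions via persistence of regularity, run the classical energy computation there, and pass to the limit using continuous dependence and continuity of $\cE$ on $\cH^{1/2}$. Both are standard and both work here; the trade-off is that the paper's argument needs no ingredients beyond its local theory (no persistence of regularity, no continuous-dependence statement --- neither is proved explicitly in the paper, though both follow from the same estimates as Lemma \ref{lm:Lipschitz}, as you correctly note), at the cost of a somewhat delicate differentiation of $\cE(\cJ_\epsilon\Psi(t))$ in which $\partial_t\Psi\in\cH^{s-1}$ is paired against regularized quantities; your scheme makes the differentiation step completely classical but must separately justify the two auxiliary facts, which you do by re-running the contraction with the tame estimate $\|V[\Psi]\Psi\|_{\cH^{s'}}\lesssim\|\Psi\|_{\cH^{\gamma/2}}^2\|\Psi\|_{\cH^{s'}}$ and Gronwall. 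One point in your favor: your argument sidesteps the paper's bound \eqref{eq:gBd}, whose right-hand side involves $\|H_m\Psi(t)\|_{\cL^2}$ and is therefore cleaner to justify along your route, where all pairings involve genuinely smooth objects.
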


\begin{proof}
Let
$$\cJ_\epsilon = (\epsilon H_m + 1)^{-1}, \ \ \epsilon >0,$$
act on the sequence of embedding spaces 
$$\cdots \ \ \cH^{\frac{3}{2}} \hookrightarrow  \cH^{\frac{1}{2}} \hookrightarrow  \cH^{\frac{-1}{2}} \hookrightarrow  \cH^{\frac{-3}{2}} \ \ \cdots $$
It follows from fractional calculus that 
\begin{itemize}
\item[(i)] $\cJ_\epsilon$ is a bounded operator from $\cH^s$ to $\cH^{s+1},$
\item[(ii)] $\|\cJ_\epsilon \Psi\|_{\cH^s}\le \|\Psi\|_{\cH^s},$ and
\item[(iii)] $\cJ_\epsilon\Psi\rightarrow \Psi$ strongly in $\cH^s$ as $\epsilon\rightarrow 0.$
\end{itemize}
Now,
\begin{align*}
\cE(\cJ_\epsilon \Psi(t_2)) &- \cE(\cJ_\epsilon \Psi(t_1)) = \int_{t_1}^{t_2} \partial_t \cE(\cJ_\epsilon \Psi(t))~dt \\
=& \; \Re\Big\{\int_{t_1}^{t_2} -i\langle H_m \cJ_\epsilon \Psi(t), H_m \cJ_\epsilon \Psi(t)\rangle_{\cL^2} +
\\&+ \langle H_m \cJ_\epsilon \Psi(t), \cJ_\epsilon V[\Psi(t)]\Psi(t)\rangle_{\cL^2} + \\ 
& + \langle \cJ_\epsilon V[\cJ_\epsilon \Psi(t)] \cJ_\epsilon\Psi(t), H_m \cJ_\epsilon\Psi(t)\rangle_{\cL^2} +
\\&+ \langle \cJ_\epsilon V[\cJ_\epsilon \Psi(t)] \cJ_\epsilon\Psi(t), \cJ_\epsilon V[\Psi(t)]\Psi(t)\rangle_{\cL^2}\Big\}.
\end{align*}
The first term is trivially zero, since $H_m\cJ_\epsilon = \cJ_\epsilon H_m.$ Let 
\begin{align*}
g_\epsilon (t) = &\; \Re\{ \langle H_m \cJ_\epsilon \Psi(t), \cJ_\epsilon V[\Psi(t)]\Psi(t)\rangle_{\cL^2} +  
\\&+ \langle \cJ_\epsilon V[\cJ_\epsilon \Psi(t)] \cJ_\epsilon\Psi(t), H_m \cJ_\epsilon\Psi(t)\rangle_{\cL^2} + 
\\ &+ \langle \cJ_\epsilon V[\cJ_\epsilon \Psi(t)] \cJ_\epsilon\Psi(t), \cJ_\epsilon V[\Psi(t)]\Psi(t)\rangle_{\cL^2}\}.
\end{align*}
Then 
$$\cE(\cJ_\epsilon \Psi(t_2)) - \cE(\cJ_\epsilon \Psi(t_1)) = \int_{t_1}^{t_2} g_\epsilon (t)dt.$$
It follows from the above properties (i)-(iii) of $\cJ_\epsilon$ that $\lim_{\epsilon\rightarrow 0}g_\epsilon (t)=0.$ Furthermore, 
\begin{equation}
g_\epsilon(t) \le \|V[\Psi(t)] \Psi(t)\|_{\cL^2} \|H_m\Psi(t)\|_{\cL^2} + \|V[\Psi(t)] \Psi(t)\|_{\cL^2}^2. \label{eq:gBd} 
\end{equation}
Using Lemma \ref{lm:fSmoothing}, we have 
\begin{equation*}
\|V[\Psi] \Psi\|_{\cL^2} \lesssim \sum_{k,l\ge 1} \lambda_k\lambda_l \|\psi_l\|^2_{\dot{H}^{\frac{\gamma}{2}}} \|\psi_k\|_{L^2}.
\end{equation*}
The Gagliardo-Nirenberg inequality, 
$$\|\psi_l\|_{\dot{H}^{\frac{\gamma}{2}}}  \lesssim \|\psi_l\|^{\gamma}_{\dot{H}^{\frac{1}{2}}}\|\psi_l\|_{L^2}^{1-\gamma},$$ together with conservation of charge (Lemma \ref{lm:charge}), yields  
\begin{align*}
\|V[\Psi] \Psi\|_{\cL^2} &\lesssim \sum_{l\ge 1} \lambda_l \|\psi_l\|_{\dot{H}^{\frac{1}{2}}}^{2\gamma}\\
& \lesssim (\sum_{l\ge 1} \lambda_l \|\psi_l\|_{\dot{H}^{\frac{1}{2}}}^2)^{\gamma}\\
&\lesssim \|\Psi\|_{\cH^{\frac{1}{2}}}^{2\gamma} ,
\end{align*}
where we have used in the second inequality the fact that $\sum_{l\ge 1}\lambda_l=1,  \ \ \lambda_l\ge 0,$ and $f(x)=x^\gamma, \ \ 0<\gamma<1,$ is concave (equality when $\gamma=1$ is trivially satisfied).
Substituting back in (\ref{eq:gBd}) yields
$$g_\epsilon(t) \lesssim \|\Psi\|_{\cH^{\frac{1}{2}}}^{2\gamma+1} +\|\Psi\|_{\cH^{\frac{1}{2}}}^{4\gamma},$$
which is finite for $t<\tau^*.$ By the Dominated Convergence Theorem, $$\cE(\Psi(t_2)) - \cE(\Psi(t_1)) = \int_{t_1}^{t_2} \lim_{\epsilon\rightarrow0} g_\epsilon(t)dt = 0 \,,$$
as claimed.
\end{proof}

Global well-posedness follows from conservation of charge and energy.

\begin{proposition}\label{pr:global}
Suppose that the hypotheses of Proposition \ref{pr:local} hold. Then, if $g>0$ or $g<0$ with $\|\Psi(0)\|_{\cL^2}$ small enough, 
$$\|\Psi(t)\|_{\cH^s} \le C \|\Psi(0)\|_{\cH^s} e^{\alpha\left(\cE(\Psi(0)) +\|\Psi(0)\|^{\delta}_{\cL^2} \right)t},$$
where $C,\alpha$ and $\delta$ are positive constants that are independent of $m\ge 0.$
\end{proposition}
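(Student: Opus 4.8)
The plan is to exclude finite-time blow-up by bounding $\|\Psi(t)\|_{\cH^s}$ on every finite interval and then reading off the exponential rate. By the blow-up alternative of Proposition \ref{pr:local} it is enough to control $\|\Psi(t)\|_{\cH^{\gamma/2}}$ on $[0,\tau^*)$: inserting the Duhamel formula into the nonlinear estimate established in the proof of Proposition \ref{pr:local}, namely $\|V[\Psi]\Psi\|_{\cH^s}\lesssim\|\Psi\|_{\cH^{\gamma/2}}^2\,\|\Psi\|_{\cH^s}$, produces $\|\Psi(t)\|_{\cH^s}\le\|\Psi(0)\|_{\cH^s}+C\int_0^t\|\Psi(t')\|_{\cH^{\gamma/2}}^2\,\|\Psi(t')\|_{\cH^s}\,dt'$, and Gronwall's inequality then gives $\|\Psi(t)\|_{\cH^s}\le\|\Psi(0)\|_{\cH^s}\exp\!\big(C\int_0^t\|\Psi(t')\|_{\cH^{\gamma/2}}^2\,dt'\big)$. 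Thus the proposition follows once I establish the time-uniform a priori bound $\sup_{0\le t<\tau^*}\|\Psi(t)\|_{\cH^{\gamma/2}}^2\lesssim\cE(\Psi(0))+\|\Psi(0)\|_{\cL^2}^{\delta}$ with implied constant free of $m$. Since $\gamma/2\le\tfrac12$, it suffices to bound $\langle\Psi(t),H_m\Psi(t)\rangle_{\cL^2}$ together with the conserved charge $\|\Psi(t)\|_{\cL^2}=\|\Psi(0)\|_{\cL^2}$ from Lemma \ref{lm:charge}.

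First I would dispose of the repulsive case $g\ge0$. Then $V[\Psi]=w_\gamma\star n[\Psi]\ge0$, so $\langle\Psi,V[\Psi]\Psi\rangle_{\cL^2}\ge0$ and conservation of energy (Lemma \ref{lm:energy}) yields directly $\tfrac12\langle\Psi(t),H_m\Psi(t)\rangle_{\cL^2}\le\cE(\Psi(t))=\cE(\Psi(0))$. Combining this with the conserved charge and the comparison between $H_m$ and the fractional Laplacian gives the required a priori bound on $\|\Psi(t)\|_{\cH^{\gamma/2}}$, with $\delta=2$.

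The attractive case $g<0$ is where the real work lies, since now $\langle\Psi,V[\Psi]\Psi\rangle_{\cL^2}<0$ and the energy identity reads $\langle\Psi,H_m\Psi\rangle_{\cL^2}=2\cE(\Psi(0))+\tfrac12\big|\langle\Psi,V[\Psi]\Psi\rangle_{\cL^2}\big|$, so I must dominate the potential energy by a controllable fraction of the kinetic energy. Using fractional integration (Lemmas \ref{lm:fIntegralOperator}, \ref{lm:fSmoothing}) together with the Gagliardo-Nirenberg inequality $\|\psi_l\|_{\dot H^{\gamma/2}}\lesssim\|\psi_l\|_{\dot H^{1/2}}^{\gamma}\|\psi_l\|_{L^2}^{1-\gamma}$ and the weighted H\"older inequality in $l$ (valid since $\sum_l\lambda_l=1$), one obtains in the subcritical range $\gamma<1$ the bound $\big|\langle\Psi,V[\Psi]\Psi\rangle_{\cL^2}\big|\lesssim\|\Psi\|_{\dot{\cH}^{1/2}}^{2\gamma}\|\Psi\|_{\cL^2}^{2(1-\gamma)}$, whose factor $\|\Psi\|_{\dot{\cH}^{1/2}}^{2\gamma}$ is strictly sub-quadratic, so that Young's inequality gives $\eta\|\Psi\|_{\dot{\cH}^{1/2}}^2+C_\eta\|\Psi(0)\|_{\cL^2}^{\delta}$ with $\eta>0$ small; at the critical exponent $\gamma=1$ one instead uses the sharper estimate $\big|\langle\Psi,V[\Psi]\Psi\rangle_{\cL^2}\big|\lesssim\|\Psi\|_{\cL^2}^2\,\|\Psi\|_{\dot{\cH}^{1/2}}^2$, in which the conserved charge plays the role of the coupling, so that smallness of $\|\Psi(0)\|_{\cL^2}$ makes the prefactor small. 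Provided $\|\Psi\|_{\dot{\cH}^{1/2}}^2$ can be traded against $\langle\Psi,H_m\Psi\rangle_{\cL^2}$ plus charge (the delicate point noted below), the kinetic term is absorbed and I close $\langle\Psi(t),H_m\Psi(t)\rangle_{\cL^2}\lesssim\cE(\Psi(0))+\|\Psi(0)\|_{\cL^2}^{\delta}$.

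Feeding either a priori bound into the Gronwall inequality of the first paragraph yields $\|\Psi(t)\|_{\cH^s}\le C\|\Psi(0)\|_{\cH^s}\exp\!\big(\alpha(\cE(\Psi(0))+\|\Psi(0)\|_{\cL^2}^{\delta})t\big)$, and the blow-up alternative then forces $\tau^*=\infty$. The constants $C,\alpha,\delta$ contain no $m$ because the only $m$-dependent data, $\cE(\Psi(0))$ and the charge, enter solely through the exponent, while the constants of Lemmas \ref{lm:Lipschitz}, \ref{lm:fIntegralOperator}, \ref{lm:fSmoothing} and of the Sobolev embeddings are $m$-free. The step I expect to cost the most effort is the uniform-in-$m$ comparison between the conserved relativistic kinetic energy $\langle\Psi,H_m\Psi\rangle_{\cL^2}$ and the homogeneous norm $\|\Psi\|_{\dot{\cH}^{\gamma/2}}$ that drives the Gronwall argument: since $H_m=\sqrt{-\Delta+m^2}-m$ reproduces the symbol $|\xi|$ only for $|\xi|\gtrsim m$ while degenerating like $|\xi|^2/m$ at low frequencies, the comparison must be carried out by splitting in frequency, controlling the low-frequency part by the conserved charge and the high-frequency part by the kinetic energy, and one has to verify that neither this splitting nor the absorption performed in the attractive case secretly reintroduces a factor of $m$.
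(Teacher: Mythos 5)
Your proposal retraces the paper's own proof: the same Duhamel--Gronwall reduction based on $\|V[\Psi]\Psi\|_{\cH^s}\lesssim\|\Psi\|_{\cH^{\gamma/2}}^{2}\|\Psi\|_{\cH^s}$, the same two conservation laws, the same sub-quadratic bound on the potential energy (Lemma \ref{lm:fSmoothing}, Gagliardo--Nirenberg mode by mode, concavity/weighted H\"older over the index $k$), positivity in the repulsive case, and absorption in the attractive case (Young for $\gamma<1$, smallness of the charge at $\gamma=1$). So there is no divergence of method to report.

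The problem is the step you flag and defer: trading $\|\Psi\|_{\dot\cH^{1/2}}^2$ (or $\|\Psi\|_{\dot\cH^{\gamma/2}}^2$) against $\langle\Psi,H_m\Psi\rangle_{\cL^2}$ plus charge, uniformly in $m$. This is not merely delicate; as a uniform-in-$m$ statement it is false, and the frequency splitting you sketch cannot repair it. The symbol of $H_m$ is $|\xi|^2/(\sqrt{m^2+|\xi|^2}+m)$, which dominates $|\xi|^{\gamma}$ only for $|\xi|\gtrsim m^{1/(2-\gamma)}$; any splitting therefore leaves a low-frequency block of size $\Lambda^{\gamma}\|\Psi\|_{\cL^2}^2$ with threshold $\Lambda\sim m^{1/(2-\gamma)}$, i.e.\ a factor $m^{\gamma/(2-\gamma)}$, which is exactly the dependence you need to exclude. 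Concretely, already in the repulsive case, take $\ulambda=(1,0,0,\dots)$ and $\psi_1=e^{i\xi_0\cdot x}\phi$ with $\|\phi\|_{L^2}=1$ and $|\xi_0|=\sqrt{m}$ (completed arbitrarily to an orthonormal basis): then $\|\Psi\|_{\dot\cH^{\gamma/2}}^2\sim m^{\gamma/2}\rightarrow\infty$, while $\langle\Psi,H_m\Psi\rangle_{\cL^2}\rightarrow\tfrac12$, the potential energy does not depend on the modulation $\xi_0$, and the charge equals $1$; hence no $m$-independent constants $\alpha,\delta$ can make your a priori estimate (equivalently, the paper's (\ref{eq:PsiFracNorm})) hold at $t=0$. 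What your argument genuinely proves is the proposition with constants depending on $m$ (use $|\xi|^{\gamma}\le 1+|\xi|\le 1+m+(\sqrt{m^2+|\xi|^2}-m)$, then absorb as you describe), which suffices to conclude $\tau^*=\infty$ and hence Theorem \ref{th:Well-posedness}, but not the $m$-uniformity asserted in the statement. You should know, however, that the paper's own proof is in the same position: it passes from the potential-energy bound to (\ref{eq:PsiFracNorm}) with the words ``together with conservation of energy, this implies,'' silently assuming precisely the comparison you flagged. In that sense your write-up matches the paper, including its weakest point --- the difference being that you identified it.
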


\begin{proof}
We start by bounding $\|\Psi(t)\|_{\dot{\cH}^{\frac{\gamma}{2}}}$ from above, uniformly in time. 
\begin{align*}
\langle \Psi, V[\Psi]\Psi\rangle_{\cL^2} & = \sum_{l\ge 1} \lambda_ l \langle \psi_l , V[\Psi]\psi_l\rangle \\
&\le \|V[\Psi]\|_{L^\infty} \|\Psi\|_{\cL^2}^2\\
&\lesssim (\sum_{k\ge 1} \lambda_k \|\psi_k\|_{\dot{H}^{\frac{\gamma}{2}}}^2) \|\Psi\|_{\cL^2}^2\\
&\lesssim (\sum_{k\ge 1} \lambda_k \|\psi_k\|_{\dot{H}^{\frac{1}{2}}}^{2\gamma}) \|\Psi\|_{\cL^2}^2\\
&\lesssim (\sum_{k\ge 1} \lambda_k \|\psi_k\|_{\dot{H}^{\frac{1}{2}}}^{2})^\gamma \|\Psi\|_{\cL^2}^2\\
&\lesssim \|\Psi\|_{\cH^{\frac{1}{2}}}^\gamma \|\Psi\|_{\cL^2}^2.
\end{align*}
Here, we used H\"older's inequality in the second line, Lemma \ref{lm:fSmoothing} in the third line, the Gagliardo-Nirenberg inequality and conservation of charge in the fourth line, and $\sum_{k\ge 1}\lambda_k=1, \ \ \lambda_k\ge 0,$ the fact that $x^\gamma, \ \ 0<\gamma<1,$ is concave in the fifth line (equality when $\gamma=1$ is trivially satisfied). 
Together with conservation of energy (Lemma \ref{lm:energy}), this implies that for $g>0$ or $g<0$ with $\|\Psi(0)\|_{\cL^2}$ small enough,
\begin{equation}
\|\Psi\|_{\dot{\cH}^{\frac{\gamma}{2}}} \le \alpha\left(\cE(\Psi(t)) + \|\Psi(0)\|^\delta_{\cL^2}\right),\label{eq:PsiFracNorm}
\end{equation}
where $\alpha$ and $\delta$ are constants independent of the mass $m\ge 0.$
Now, it follows from the Duhamel formula that 
\begin{align*}
\|\Psi(t)\|_{\cH^s} &\le \|\Psi(0)\|_{\cH^s} + \int_0^t \|\Psi(t')\|_{\dot{\cH}^{\frac{\gamma}{2}}}^2 \|\Psi(t')\|_{\cH^s} ~dt'\\
&\le \|\Psi(0)\|_{\cH^s} + \alpha\left(\cE(\Psi(t)) + \|\Psi(0)\|^\delta_{\cL^2}\right) \int_0^t \|\Psi(t')\|_{\cH^s} ~dt',
\end{align*}
where we used H\"older's and Minkowski inequalities in the first line, and (\ref{eq:PsiFracNorm}) in the second line.
By Gronwall's lemma, 
$$\|\Psi(t)\|_{\cH^s}\le \|\Psi(0)\|_{\cH^s} e^{\alpha\left(\cE(\Psi(t)) + \|\Psi(0)\|^\delta_{\cL^2}\right) t}$$
follows.
\end{proof}

\begin{proof}[Proof of Theorem \ref{th:Well-posedness}]
It follows from Propositions \ref{pr:local} and \ref{pr:global} that $\tau^*=\infty,$ i.e., the generalized semi-relativistic Schr\"odinger-Poisson system of equations is globally well-posed.
\end{proof}

We now prove the claim of Remark \ref{rm:zeromass} about the asymptotic behaviour of the system as the mass tends to zero.

\begin{proposition}\label{pr:zeromass}
Consider the system of equations (\ref{eq:SP})-(\ref{eq:n}) with initial condition 
$(\underline{\lambda},\Psi(0)).$ Let $\Psi^{(0)}$ denote the solution of the initial value problem with mass $m=0,$ and fix $T>0.$ Under the hypotheses of Proposition \ref{pr:global}, 
$\Psi\rightarrow \Psi^{(0)}$ strongly in $L^\infty_T(\cH^s)$ as $m\rightarrow 0.$ 
\end{proposition}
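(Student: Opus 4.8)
The plan is to compare $\Psi$ and $\Psi^{(0)}$ through the Duhamel formulas for the two propagators $U^{(m)}(t)=e^{-itH_m}$ and $U^{(0)}(t)=e^{-it\sqrt{-\Delta}}$, and to close a Gronwall estimate in $\cH^s$. Subtracting the two Duhamel representations and inserting $U^{(0)}$ where convenient, I would write
$$\Psi(t)-\Psi^{(0)}(t) = \big(U^{(m)}(t)-U^{(0)}(t)\big)\Psi(0) - i\int_0^t \big(U^{(m)}(t-t')-U^{(0)}(t-t')\big)V[\Psi(t')]\Psi(t')\,dt' - i\int_0^t U^{(0)}(t-t')\big(V[\Psi(t')]\Psi(t')-V[\Psi^{(0)}(t')]\Psi^{(0)}(t')\big)\,dt'.$$
The first two terms measure how close the two free evolutions are, while the third is controlled by the local Lipschitz bound of Lemma \ref{lm:Lipschitz} and produces the Gronwall integral.

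The crux is a quantitative, uniform bound on the difference of propagators. Both are Fourier multipliers, with symbols $e^{-it(\sqrt{|\xi|^2+m^2}-m)}$ and $e^{-it|\xi|}$ respectively. The elementary inequality $0\le |\xi|+m-\sqrt{|\xi|^2+m^2}\le m$ (the left from $\sqrt{|\xi|^2+m^2}\le|\xi|+m$, the right from $\sqrt{|\xi|^2+m^2}\ge|\xi|$) gives, for $t\in[0,T]$,
$$\big|e^{-it(\sqrt{|\xi|^2+m^2}-m)}-e^{-it|\xi|}\big| = \big|e^{-it(|\xi|+m-\sqrt{|\xi|^2+m^2})}-1\big| \le t\,\big(|\xi|+m-\sqrt{|\xi|^2+m^2}\big)\le Tm,$$
a bound uniform in $\xi$ and in $t\in[0,T]$ with no derivative loss. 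By Plancherel against the weight $(1+|\xi|^2)^{s/2}$ and summation against the $\lambda_k$, this yields $\|(U^{(m)}(t)-U^{(0)}(t))\Phi\|_{\cH^s}\le Tm\,\|\Phi\|_{\cH^s}$ for every $\Phi\in\cH^s$ and $t\in[0,T]$.

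With this estimate in hand the first term is bounded by $Tm\|\Psi(0)\|_{\cH^s}$, and the second by $T^2 m\sup_{t'\le T}\|V[\Psi(t')]\Psi(t')\|_{\cH^s}\lesssim T^2 m\,R^3$, using $\|V[\Psi]\Psi\|_{\cH^s}\lesssim\|\Psi\|_{\cH^s}^3$ (Lemma \ref{lm:Lipschitz} with $\Phi=0$) and the uniform-in-$m$ a priori bound $R:=\sup_{m\ge 0}\|\Psi\|_{L^\infty_T\cH^s}<\infty$ furnished by Proposition \ref{pr:global}, whose constants are $m$-independent; the same bound applies to $\Psi^{(0)}$. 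Since $U^{(0)}$ is unitary on each $H^s$, the third term is bounded by $\int_0^t\|V[\Psi]\Psi-V[\Psi^{(0)}]\Psi^{(0)}\|_{\cH^s}\,dt'\lesssim R^2\int_0^t\|\Psi(t')-\Psi^{(0)}(t')\|_{\cH^s}\,dt'$.

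Combining the three contributions gives
$$\|\Psi(t)-\Psi^{(0)}(t)\|_{\cH^s}\le C\,m\,(1+R^3)\,T + C R^2\int_0^t\|\Psi(t')-\Psi^{(0)}(t')\|_{\cH^s}\,dt',$$
and Gronwall's lemma yields $\|\Psi-\Psi^{(0)}\|_{L^\infty_T\cH^s}\le C\,m\,(1+R^3)\,T\,e^{C R^2 T}\to 0$ as $m\to 0$, which is the claim. The only genuine obstacle is the propagator-difference estimate above; once the clean symbol bound $Tm$ is secured, the remainder is the standard Duhamel--Gronwall machinery already employed in this section, with all constants manifestly independent of $m$.
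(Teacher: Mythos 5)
Your proof is correct and takes essentially the same route as the paper's: Duhamel's formula, the uniform Fourier-symbol bound $\bigl|e^{-it(\sqrt{|\xi|^2+m^2}-m)}-e^{-it|\xi|}\bigr|\le tm$ for the propagator difference, the local Lipschitz estimate of Lemma \ref{lm:Lipschitz} combined with the $m$-uniform a priori bound from Proposition \ref{pr:global}, and Gronwall's lemma. The only (immaterial) differences are that you place the propagator difference on $V[\Psi]\Psi$ where the paper places it on $V[\Psi^{(0)}]\Psi^{(0)}$, and that your explicit inequality $0\le|\xi|+m-\sqrt{|\xi|^2+m^2}\le m$ is a cleaner statement of the symbol bound the paper invokes in shorthand.
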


\begin{proof}
Proposition \ref{pr:global} implies that, given $T>0,$ there exists finite $\rho>0$ such that 
\begin{equation}
\sup_{m\in [0,1]}\|\Psi\|_{L^\infty_T\cH^s}<\rho. \label{eq:NormBd1}
\end{equation}
We now compare the norm of the difference of $\Psi(t)$ and $\Psi^{(0)}(t),$ $t\in [0,T].$ It follows from the Duhamel formula that 
\begin{align*}
\|\Psi(t) &- \Psi^{(0)}(t)\|_{\cH^s} \lesssim \|\left(U^{(m)}(t)-U^{(0)}(t) \right) \Psi(0)\|_{\cH^s} +
\\ & + \int_0^t \{\|V[\Psi(t')]\Psi(t') - V[\Psi^{(0)}(t')]\Psi^{(0)}(t')\|_{\cH^s}  + \\ \ \ &+ \|\left(U^{(m)}(t')-U^{(0)}(t') \right) V[\Psi^{(0)}(t')]\Psi^{(0)}(t')\|_{\cH^s}\} dt'\\
\lesssim& \; mT\|\Psi(0)\|_{\cH^s} +   \int_0^t\|V[\Psi(t')]\Psi(t') - V[\Psi^{(0)}(t')]\Psi^{(0)}(t')\|_{\cH^s}~dt' \\ & \ \  + \frac{mT^2}{2}\|V[\Psi^{(0)}]\Psi^{(0)}\|_{L^\infty_T\cH^s}, 
\end{align*}
where we used Minkowski inequality in the first inequality and H\"older's inequality in the second. 
We also used $0\leq \sqrt{-\Delta+m^2}-m\leq m$.

It follows from the fact that the nonlinearity is locally Lipschitz (Lemma \ref{lm:Lipschitz}) and (\ref{eq:NormBd1}) that 
\begin{align*}
& \|V[\Psi(t')]\Psi(t') - V[\Psi^{(0)}(t')]\Psi^{(0)}(t')\|_{\cH^s} \lesssim \rho^2 \|\Psi(t')-\Psi^{(0)}(t')\|_{\cH^s},\\
& \|V[\Psi^{(0)}]\Psi^{(0)}\|_{L^\infty_T\cH^s} \lesssim \rho^3.
\end{align*}
Hence
$$\|\Psi(t)-\Psi^{(0)}(t)\|_{\cH^s}\lesssim m\rho T + m\rho^3 T + \rho^2 \int_0^t \|\Psi(t')-\Psi^{(0)}(t')\|_{\cH^s} ~dt' .$$
By Gronwall's lemma, $\Psi\rightarrow \Psi^{(0)}$ strongly in $L^\infty_T(\cH^s)$ as $m\rightarrow 0.$ 
\end{proof}

\section{Asymptotic behaviour of solutions as mass tends to infinity}\label{sec:Nonrelativistic}

In this section, we discuss the asymptotics of the solution as the mass $m$ tends to infinity. 

\begin{proof}[Proof of Theorem \ref{th:InfiniteMass}]
Recall that from the proof of local well-posedness in Section \ref{sec:local}, there exists $T>0$ independent of $m$ such that $\|\Psi\|_{L^\infty_{T}\cH^s} \le C\|\Psi(0)\|_{\cH^s},$ where $C$ is independent of $m.$ 
Similarly, one can show that there exists $T'>0$ independent of $m$ such that $\|\Gamma\|_{L^\infty_{T'}\cH^s} \le C\|\Psi(0)\|_{\cH^s},$ where $C$ is independent of $m.$ Let $\tau = \min(T,T').$ 
Let $\tGamma = \{\tilde{\gamma}_k\}_{k\in{\mathbb N}}$ satisfy the system of equations 
\begin{equation*}
\begin{cases}
i\partial_t \tGamma=V[\tGamma]\tGamma,\\
V[\tGamma]=w_\gamma \star n[\tGamma],\ \  n[\tGamma]=\sum_{k=1}^{\infty}\lambda_{k}|\tilde{\gamma}_{k}|^{2},
\end{cases}
\end{equation*} 
with initial condition $\tGamma(0)=\Psi(0).$
Alternatively, $\tGamma$ satisfies the integral equation
$$\tGamma(t) = \Psi(0) -i\int_0^t V[\tGamma(t')]\tGamma(t')dt'.$$
Uniqueness of the solution follows from the fact that the nonlinearity is locally Lipschitz (Lemma \ref{lm:Lipschitz}).
We are going to compare $\Psi$ to $\tGamma,$ and then $\tGamma$ to $\Gamma.$

\begin{align*}
  \|\Psi(t)-\tGamma(t)\|_{\cH^s} &\le \|\left(U^{(m)}(t) -1 \right)\Psi(0)\|_{\cH^s} +
  \\&+ \int_0^t \|\left( U^{(m)}(t-t')-1 \right)V[\tGamma(t')]\tGamma(t')\|_{\cH^s} dt' + \\ &+ \int_0^t \|V[\Psi(t')]\Psi(t') - V[\tGamma(t')]\tGamma(t')\|_{\cH^s}dt'.
\end{align*}

To estimate the first term on the right-hand-side, we apply the Fourier transform and use Parseval's Theorem,
\begin{align*}
&\|\left(U^{(m)}(t) -1 \right)\Psi(0)\|^2_{\cH^s} 
\\&= \sum_{l\ge 1} \lambda_l \int_{\bbR^n} |e^{-it (\sqrt{m^2+ |k|^2} -m)} -1|^2 (1+|k|^2)^{s}|\widehat{\psi_l}(0,k)|^2 dk\\
&\le \sum_{l\ge 1} \lambda_l \{\int_{|k|\le m^{\frac{1}{4}}} |e^{-it (\sqrt{m^2+ |k|^2} -m)} -1|^2 (1+|k|)^{2s}|\widehat{\psi_l}(0,k)|^2 dk + 
\\ &+ \int_{|k|> m^{\frac{1}{4}}} |e^{-it (\sqrt{m^2+ |k|^2} -m)} -1|^2 (1+|k|)^{2s}|\widehat{\psi_l}(0,k)|^2 dk \}\\
&\le  \sum_{l\ge 1} \lambda_l \{\int_{|k|\le m^{\frac{1}{4}}} \frac{t^2|k|^4}{(\sqrt{m^2+|k|^2}+m)^2} (1+|k|)^{2s}|\widehat{\psi_l}(0,k)|^2 dk + \\ &+ 4\int_{|k|> m^{\frac{1}{4}}} (1+|k|)^{2s}|\widehat{\psi_l}(0,k)|^2 dk \}\\
&\le \frac{\tau^2}{4m}\|\Psi(0)\|_{\cH^s}^2 + 4\sum_{l\ge 1}\int_{|k|> m^{\frac{1}{4}}} (1+|k|)^{2s}|\widehat{\psi_l}(0,k)|^2 dk \\
&\rightarrow 0 \ \ {\mathrm as} \ \ m\rightarrow \infty.
\end{align*}

Since $V[\tGamma]\tGamma\in \cH^s,$ it follows from the Dominated Convergence Theorem that 
$$\lim_{m\rightarrow \infty} \int_0^t \|\left( U^{(m)}(t-t')-1 \right)V[\tGamma(t')]\tGamma(t')\|_{\cH^s} dt' =0.$$

To estimate the third term, let $\rho >0$ be a constant such that 
$$\sup_{m\ge 1} (\|\Psi\|_{L^\infty_\tau\cH^s} + \|\Gamma\|_{L^\infty_\tau\cH^s}) + \|\tGamma\|_{L^\infty_\tau\cH^s}\le \rho.$$
It follows from the fact that the nonlinearity is locally Lipschitz that 
$$ \|V[\Psi(t')]\Psi(t') - V[\tGamma(t')]\tGamma(t')\|_{\cH^s} \le C \rho^2 \|\Psi(t')-\tGamma(t')\|_{\cH^s},$$ where $C$ is a positive constant independent of $m.$

Therefore, 
$$\|\Psi(t)-\tGamma(t)\|_{\cH^s} \le f_m + C \rho^2\int_0^t   \|\Psi(t')-\tGamma(t')\|_{\cH^s} dt', $$
where $\lim_{m\rightarrow\infty} f_m =0$ and $C$ is independent of $m.$
It follows from Gronwall's lemma that $$\lim_{m\rightarrow\infty}\|\Psi-\tGamma\|_{L^\infty_\tau\cH^s} = 0.$$ 
Similarly, one can show that 
$$\|\Gamma(t)-\tGamma(t)\|_{\cH^s} \le g_m + C \rho^2\int_0^t   \|\Psi(t')-\tGamma(t')\|_{\cH^s} dt', $$
where $\lim_{m\rightarrow\infty} g_m =0$ and $C$ is independent of $m,$ and it follows that 
$$\lim_{m\rightarrow\infty}\|\Gamma-\tGamma\|_{L^\infty_\tau\cH^s} = 0.$$
Since 
$$\|\Psi-\Gamma\|_{L^\infty_\tau\cH^s}\le \|\Psi-\tGamma\|_{L^\infty_\tau\cH^s} + \|\Gamma-\tGamma\|_{L^\infty_\tau\cH^s},$$
it follows that 
$$\lim_{m\rightarrow\infty} \|\Psi-\Gamma\|_{L^\infty_\tau\cH^s} =0\,,$$
as desired.
\end{proof}

\section*{Acknowledgements}
WAS acknowledges the financial support of a Discovery grant from the Natural Sciences and Engineering Research Council of Canada. 
T.C. was supported by 
NSF grants DMS-1009448 and DMS-1151414 (CAREER). 


\appendix

\section{}
\label{sec:Preliminaries}

For the benefit of a general reader, we briefly recall in this section some useful results about fractional Leibniz rule and inequalities for fractional integral operator. In what follows, we denote $\cD = (-\Delta)^{1/2}.$ The proof can be found in the specified references.

The following result about the fractional Leibniz rule can be found in \cite{Kato95}.
\begin{lemma}\label{lm:fLeibniz}
$$\|\cD^s(uv)\|_{L^p} \lesssim \|\cD^s u \|_{L^{q_1}}\|v\|_{L^{r_1}} + \|u\|_{L^{q_2}}\|\cD^s v\|_{L^{r_2}},$$
where $\frac{1}{p} = \frac{1}{q_i}+\frac{1}{r_i}, \ \ i=1,2.$
\end{lemma}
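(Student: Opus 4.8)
The plan is to prove the estimate by the Littlewood--Paley / paraproduct method, reducing the bilinear map $(u,v)\mapsto \cD^s(uv)$ to three pieces that can each be controlled by the square-function characterization of $L^p$, Bernstein's inequalities, and the Fefferman--Stein vector-valued maximal inequality. Fix a standard inhomogeneous Littlewood--Paley family: let $P_j$ denote the frequency projection onto $|\xi|\sim 2^j$ and $S_j=\sum_{i\le j}P_i$ the projection onto $|\xi|\lesssim 2^j$, so that $\sum_j P_j=\mathrm{Id}$. First I would record the two facts used repeatedly: (a) for a function frequency-localized in $|\xi|\sim 2^j$ one has $\|\cD^s P_j h\|_{L^r}\sim 2^{js}\|P_j h\|_{L^r}$ (Bernstein), and (b) $\|h\|_{L^r}\sim \|(\sum_j|P_j h|^2)^{1/2}\|_{L^r}$ for $1<r<\infty$, so that $\|\cD^s h\|_{L^r}\sim \|(\sum_j 2^{2js}|P_j h|^2)^{1/2}\|_{L^r}$.

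Next I would insert Bony's paraproduct decomposition
\begin{equation*}
uv=\sum_k S_{k-2}u\,P_k v \;+\; \sum_k P_k u\,S_{k-2} v \;+\; \sum_{|j-k|\le 1} P_j u\,P_k v =: \Pi_1+\Pi_2+R,
\end{equation*}
and treat the three terms separately. The two paraproducts are the easy, \emph{off-diagonal} pieces, and they reproduce the two terms on the right-hand side. For $\Pi_1$ each summand $S_{k-2}u\,P_k v$ has frequency support in $|\xi|\sim 2^k$, so by (a) and (b), $\|\cD^s\Pi_1\|_{L^p}\lesssim \|(\sum_k 2^{2ks}|S_{k-2}u\,P_k v|^2)^{1/2}\|_{L^p}\lesssim \|\sup_k|S_{k-2}u|\cdot(\sum_k 2^{2ks}|P_k v|^2)^{1/2}\|_{L^p}$. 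Controlling $\sup_k|S_{k-2}u|$ pointwise by the Hardy--Littlewood maximal function and applying Hölder with $\tfrac1p=\tfrac1{q_2}+\tfrac1{r_2}$ yields $\|u\|_{L^{q_2}}\|\cD^s v\|_{L^{r_2}}$; the symmetric computation for $\Pi_2$ gives $\|\cD^s u\|_{L^{q_1}}\|v\|_{L^{r_1}}$.

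The main obstacle is the resonant (high--high) term $R$, where the product of two comparably high frequencies can produce output at arbitrarily low frequency, so $\cD^s$ can no longer be pulled onto a single factor pointwise. Here I would localize the output frequency, writing $P_\ell R=\sum_{k\gtrsim \ell}P_\ell(P_k u\,\widetilde P_k v)$ with $\widetilde P_k=\sum_{|j-k|\le1}P_j$, use $\|\cD^s P_\ell(\cdot)\|\sim 2^{\ell s}\|P_\ell(\cdot)\|$, and then exploit that for each fixed $\ell$ the sum over $k\gtrsim\ell$ carries a gain $2^{(\ell-k)s}$; because $s>0$ the resulting geometric series $\sum_{\ell\le k}2^{(\ell-k)s}$ converges, and after the square-function and Hölder steps $R$ is dominated by, e.g., $\|\cD^s u\|_{L^{q_1}}\|v\|_{L^{r_1}}$. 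This summation is exactly the delicate point: it is where the positivity $s>0$ is essential, and it is what forces the vector-valued Fefferman--Stein maximal inequality (to remove the maximal functions coming from the $P_\ell$ convolutions) and hence the restriction $1<q_i,r_i<\infty$ in the interior range. I expect the bookkeeping of this geometric series together with the justification of the Fefferman--Stein step to be the only genuine work; everything else is routine Bernstein and Hölder manipulation.

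Finally, for the endpoint cases where some exponent equals $\infty$ (as in the Coulomb-type application of the lemma, in which one factor sits in $L^\infty$) the scalar maximal-function argument degenerates, and I would instead invoke the Coifman--Meyer bilinear multiplier theorem: after extracting the dyadic weights $2^{js}$, the symbol of the normalized bilinear operator is a Coifman--Meyer multiplier, which recovers the full admissible range of exponents including the endpoints and reproduces the statement exactly as cited in \cite{Kato95}.
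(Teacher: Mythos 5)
The paper itself does not prove Lemma \ref{lm:fLeibniz}: it is stated in the appendix as a known result, with the proof explicitly deferred to Kato \cite{Kato95}, so there is no internal argument to compare yours against. Judged on its own terms, your outline is the standard Littlewood--Paley/paraproduct proof of the Kato--Ponce fractional Leibniz rule, and it is sound: the two paraproducts $\Pi_1,\Pi_2$ are handled exactly as you describe, and you correctly isolate the two genuinely technical points --- the Schur-type summation of the $2^{(\ell-k)s}$ gain over the resonant high--high interactions (which is where $s>0$ enters) and the Fefferman--Stein vector-valued maximal inequality needed to close the square-function estimates. Two remarks. First, your concluding paragraph is overcautious about the $L^\infty$ endpoint: the endpoint this paper actually needs (in Lemma \ref{lm:Lipschitz} and Proposition \ref{pr:local}) places the \emph{undifferentiated} factor in $L^\infty$ (i.e.\ $q_2=\infty$, $r_2=p=2$), and for that case the paraproduct argument goes through verbatim, since $\sup_k|S_{k-2}u|\le\|u\|_{L^\infty}$ pointwise (the Hardy--Littlewood maximal operator is in any case bounded on $L^\infty$) and the square-function characterization is only ever invoked at the finite exponents; no Coifman--Meyer input is required. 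The genuinely delicate endpoint is the one where the factor carrying $\cD^s$ is measured in $L^\infty$, and that case is not used here; invoking Coifman--Meyer is nonetheless a legitimate alternative way to treat endpoints. Second, two small points of hygiene: since $\cD^s$ is a homogeneous derivative, the decomposition you write (annuli $|\xi|\sim 2^j$ with $\sum_j P_j=\mathrm{Id}$) is the homogeneous one, despite being labeled inhomogeneous; and the resonant-term estimate together with the Fefferman--Stein step is described rather than executed, so as written this is a correct and complete strategy but not yet a complete proof.
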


The second result is about inequality involving fractional integral operators, which can be found, for example, in \cite{Stein93}. 
\begin{lemma}\label{lm:fIntegralOperator}
Let $I_\alpha$, for $0<\alpha<n$,
be the fractional integral operator 
$$I_\alpha (u) = \int_{\bbR^n} |x-y|^{\alpha-n} u(y)~dy.$$ Then 
$$\|I_\alpha (u)\|_{L^p} \lesssim \|u\|_{L^q}, \ \ \frac{1}{p} = \frac{1}{q}-\frac{\alpha}{n}.$$
\end{lemma}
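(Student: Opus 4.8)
The plan is to establish this via the classical argument of Hedberg, which reduces the inequality to a pointwise estimate involving the Hardy--Littlewood maximal function $Mu$. Throughout I would replace $u$ by $|u|$, so that it suffices to bound the positive operator with kernel $|x-y|^{\alpha-n}$. The admissible range is $1<q<n/\alpha$, which is exactly what the scaling relation $\frac1p=\frac1q-\frac\alpha n$ forces in order that $0<p<\infty$, and I would record this restriction at the outset.

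First I would split the integral defining $I_\alpha(u)(x)$ at a radius $R>0$ into a near contribution over $\{|x-y|<R\}$ and a far contribution over $\{|x-y|\ge R\}$. For the near part, decomposing the ball into dyadic annuli $2^{-j-1}R\le|x-y|<2^{-j}R$ and bounding each annular integral by the average, hence by $Mu(x)$ times the volume, the resulting geometric series in $2^{-j\alpha}$ converges precisely because $\alpha>0$; this yields a bound of the form $R^\alpha\,Mu(x)$. For the far part, I would apply H\"older's inequality with exponents $q$ and $q'$: the tail integral $\int_{|z|\ge R}|z|^{(\alpha-n)q'}\,dz$ converges exactly when $q'>n/(n-\alpha)$, i.e. $q<n/\alpha$, and evaluates to a constant times $R^{\alpha-n/q}$, giving the bound $R^{\alpha-n/q}\,\|u\|_{L^q}$.

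Next I would optimize over $R$. Equating the two contributions gives $R^{n/q}=\|u\|_{L^q}/Mu(x)$, and substituting back produces the pointwise estimate $|I_\alpha(u)(x)|\lesssim (Mu(x))^{q/p}\,\|u\|_{L^q}^{1-q/p}$, where the exponent $q/p=1-\tfrac{\alpha q}{n}$ comes directly from the scaling relation. Raising to the power $p$, integrating in $x$, and invoking the Hardy--Littlewood maximal inequality $\|Mu\|_{L^q}\lesssim\|u\|_{L^q}$ (valid since $q>1$) yields $\|I_\alpha(u)\|_{L^p}^p\lesssim\|u\|_{L^q}^p$, where a final check that $\tfrac{\alpha qp}{n}+q=p$ (again the scaling relation) confirms that the exponents close up.

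The main obstacle is bookkeeping of the exponents rather than any deep estimate: one must verify that the convergence of the far tail, the optimization of $R$, and the power produced by the maximal inequality are all governed simultaneously by the single identity $\frac1p=\frac1q-\frac\alpha n$. The one hypothesis that cannot be relaxed within this scheme is $q>1$, which is needed for the maximal inequality; at the endpoint $q=1$ only a weak-type $(1,\,n/(n-\alpha))$ bound survives, and recovering strong-type estimates there would instead require Marcinkiewicz interpolation between weak-type bounds.
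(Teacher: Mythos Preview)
Your argument is correct: this is Hedberg's pointwise inequality reducing the Hardy--Littlewood--Sobolev bound to the $L^q$-boundedness of the maximal function, and all the exponent bookkeeping you record is accurate, including the restriction $1<q<n/\alpha$ and the weak-type endpoint remark.

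The paper itself does not prove this lemma; it is stated in the appendix with the sentence ``The proof can be found in the specified references'' and a citation to Stein's \emph{Harmonic Analysis}. So there is no approach to compare against --- you have supplied a complete self-contained proof where the paper simply quotes the result. Your choice of the Hedberg maximal-function argument is one of the standard routes (the other common one being Marcinkiewicz interpolation between the weak-type $(1,\,n/(n-\alpha))$ and, say, a trivial $L^\infty$ endpoint after truncation); either would serve, and yours has the advantage of giving the sharper pointwise estimate $|I_\alpha u(x)|\lesssim (Mu(x))^{1-\alpha q/n}\|u\|_{L^q}^{\alpha q/n}$ along the way.
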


We also recall the following useful Hardy-type inequality.

\begin{lemma}\label{lm:fSmoothing}
Let $0<\gamma<n$. Then,
$$\sup_{x\in {\mathbb R}^n} |\int_{{\mathbb R}^n} \frac{1}{|x-y|^\gamma} |u(y)|^2dy| \lesssim \|u\|_{\dot{H}^{\frac{\gamma}{2}}}^2 \,.$$
\end{lemma}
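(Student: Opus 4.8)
The plan is to recognize the left-hand side as the supremum over $x$ of the convolution $\big(|\cdot|^{-\gamma}\star|u|^2\big)(x)$ and to reduce the estimate, by translation, to a single weighted bound at the origin, namely Herbst's (fractional Hardy) inequality. First I would note that for each fixed $x\in\mathbb{R}^n$ the change of variables $z=y-x$ gives
$$\int_{\mathbb{R}^n}\frac{|u(y)|^2}{|x-y|^\gamma}\,dy=\int_{\mathbb{R}^n}\frac{|u_x(z)|^2}{|z|^\gamma}\,dz,\qquad u_x(z):=u(x+z),$$
and since the homogeneous Sobolev norm is translation invariant, $\|u_x\|_{\dot H^{\gamma/2}}=\|u\|_{\dot H^{\gamma/2}}$. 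Hence it suffices to prove the $x$-independent estimate
$$\int_{\mathbb{R}^n}\frac{|v(z)|^2}{|z|^\gamma}\,dz\lesssim\|v\|_{\dot H^{\gamma/2}}^2,$$
and then take the supremum over $x$; because the constant is independent of the translate, it controls the sup directly.

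For the reduced estimate I would set $v=c_{n,\gamma}\,I_{\gamma/2}g$ with $g:=c_{n,\gamma}^{-1}(-\Delta)^{\gamma/4}v$, so that $\widehat v(\xi)=|\xi|^{-\gamma/2}\widehat g(\xi)$ and, up to a fixed constant, $\|g\|_{L^2}=\|v\|_{\dot H^{\gamma/2}}$. Substituting, the claim becomes the weighted bound
$$\big\||z|^{-\gamma/2}\,I_{\gamma/2}g\big\|_{L^2}\lesssim\|g\|_{L^2},$$
which is precisely the diagonal case $p=q=2$, $\alpha=\beta=\gamma/2$ of the Stein--Weiss (weighted Hardy--Littlewood--Sobolev) inequality; the admissibility conditions $0<\gamma/2<n$ and $\beta=\gamma/2<n/2$ are exactly the hypothesis $0<\gamma<n$. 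Equivalently, this is Herbst's inequality $\||z|^{-\gamma/2}v\|_{L^2}\lesssim\|(-\Delta)^{\gamma/4}v\|_{L^2}$, which I would either cite or establish via the short dilation/Mellin-transform argument.

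The main obstacle is that the unweighted inequality of Lemma \ref{lm:fIntegralOperator} cannot be applied directly. Writing the left-hand side as $(I_{n-\gamma}|u|^2)(x)$ and demanding an $L^\infty_x$ bound lands on the forbidden endpoint $p=\infty$: the scaling relation $\frac1p=\frac1q-\frac{n-\gamma}{n}$ forces $q=\frac{n}{n-\gamma}$, and the estimate fails there. Likewise, passing to Fourier variables recasts the quantity as the bilinear form $\iint|\eta-\mu|^{\gamma-n}\,\widehat v(\eta)\overline{\widehat v(\mu)}\,d\eta\,d\mu$, which the Hardy--Littlewood--Sobolev inequality bounds only by $\|\widehat v\|_{L^{2n/(n+\gamma)}}^2$, a strictly weaker quantity than $\|v\|_{\dot H^{\gamma/2}}^2$. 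Thus the weight $|z|^{-\gamma/2}$ must be retained inside the estimate, which is exactly what the Stein--Weiss refinement supplies; this is the one genuinely nontrivial input, everything else being the routine translation reduction and Plancherel bookkeeping.
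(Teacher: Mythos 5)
Your proof is correct. Note that the paper does not actually prove this lemma: it is stated in Appendix \ref{sec:Preliminaries} as a recalled Hardy-type inequality, with the reader directed to the literature, so there is no internal argument to compare against. Your write-up supplies exactly the standard proof one would cite: the translation reduction is legitimate (both the integral $\int |u(x+z)|^2|z|^{-\gamma}\,dz$ and the $\dot H^{\gamma/2}$ norm are translation invariant, so a constant uniform in the translate bounds the supremum), and the reduced estimate $\||z|^{-\gamma/2}v\|_{L^2}\lesssim \|(-\Delta)^{\gamma/4}v\|_{L^2}$ is Herbst's fractional Hardy inequality, equivalently the diagonal $p=q=2$ case of Stein--Weiss; the admissibility conditions you check ($\gamma/2<n/2$, correct scaling $\gamma/2+0+(n-\gamma/2)=n$) are exactly $0<\gamma<n$. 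Your diagnosis of the obstruction is also on point and worth keeping: the unweighted fractional integration estimate of Lemma \ref{lm:fIntegralOperator} cannot give the result, since $L^\infty$ control of $I_{n-\gamma}(|u|^2)$ sits at the forbidden endpoint $p=\infty$, $q=n/(n-\gamma)$ (a H\"older splitting near the singularity fails logarithmically for the same reason), so the weight must be kept inside the inequality. The one loose phrase is calling $\|\widehat v\|_{L^{2n/(n+\gamma)}}^2$ ``strictly weaker'' than $\|v\|_{\dot H^{\gamma/2}}^2$ --- the two quantities are in fact not comparable --- but your operative conclusion, that the unweighted bilinear Hardy--Littlewood--Sobolev bound does not imply the lemma, is correct.
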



\end{document}